\newcommand{\sd}{\mathrm{sd}}
\newcommand{\se}{\mathrm{se}}
\newcommand{\HH}{\text{HH}}
\newcommand{\ETI}{\text{ETI}}
\newcommand{\ETIANT}{\text{ETI-ANT}}
\newcommand{\HHANT}{\text{HH-ANT}}
\newcommand{\bbE}{\mathbb{E}}
\newcommand{\bbV}{\mathbb{V}}
\newcommand{\bbone}{\mathbbm{1}}
\newcommand{\calN}{\mathcal{N}}
\newcommand{\bfA}{\mathbf{A}}
\newcommand{\bfI}{\mathbf{I}}
\newcommand{\bfU}{\mathbf{U}}
\newcommand{\bfW}{\mathbf{W}}
\newcommand{\bfX}{\mathbf{X}}
\newcommand{\bfZ}{\mathbf{Z}}
\newcommand{\sumi}{\sum_{i=1}^I}
\newcommand{\sumj}{\sum_{j=1}^J}
\newcommand{\sumq}{\sum_{q=1}^Q}
\newcommand{\sumjp}{\sum_{j=1}^{J-1}}
\newcommand{\sums}{\sum_{s=1}^{J-1}}
\newcommand{\newcheckmark}{\twemoji{check mark}}
\newcommand{\newcrossmark}{\twemoji{multiply}}
\begin{document}

\title{On anticipation effect in stepped wedge cluster randomized trials}

\author[1,8]{Hao Wang}
\author[2]{Xinyuan Chen}
\author[3,5]{Katherine R. Courtright}
\author[4,5]{Scott D. Halpern}
\author[4,5]{Michael O. Harhay}
\author[6,7]{Monica Taljaard}
\author[1,8]{Fan Li$^{*,}$}

\address[1]{\orgdiv{Department of Biostatistics}, \orgname{Yale University}, \orgaddress{\state{New Haven, CT}, \country{USA}}}
\address[2]{\orgdiv{Department of Mathematics and Statistics}, \orgname{Mississippi State University}, \orgaddress{\state{Mississippi State, MS}, \country{USA}}}
\address[3]{\orgdiv{Department of Medicine}, \orgname{Hospital of the University of Pennsylvania}, \orgaddress{\state{Philadelphia, PA}, \country{USA}}}
\address[4]{\orgdiv{Center for Clinical Trials Innovation, Department of Biostatistics, Epidemiology, and Informatics}, \orgname{Perelman School of Medicine at the University of Pennsylvania}, \orgaddress{\state{Philadelphia, PA}, \country{USA}}}
\address[5]{\orgdiv{Palliative and Advanced Illness Research (PAIR) Center}, \orgname{Perelman School of Medicine at University of Pennsylvania}, \orgaddress{\state{Philadelphia, PA}, \country{USA}}}
\address[6]{\orgdiv{Clinical Epidemiology Program}, \orgname{Ottawa Hospital Research Institute}, \orgaddress{\state{Ottawa, ON}, \country{Canada}}}
\address[7]{\orgdiv{School of Epidemiology and Public Health}, \orgname{University of Ottawa}, \orgaddress{\state{Ottawa, ON}, \country{Canada}}}
\address[8]{\orgdiv{Center for Methods in Implementation and Prevention Science}, \orgname{Yale University}, \orgaddress{\state{New Haven, CT}, \country{USA}}}
\corres{$*$Fan Li, Department of Biostatistics, Yale University, New Haven, Connecticut, USA,\\
\email{fan.f.li@yale.edu}}

\abstract[Summary]{In stepped wedge cluster randomized trials (SW-CRTs), the intervention is rolled out to clusters over multiple periods. A standard approach for analyzing SW-CRTs utilizes the linear mixed model, where the treatment effect is only present after the treatment adoption, under the assumption of no anticipation. This assumption, however, may not always hold in practice because stakeholders, providers, or individuals who are aware of the treatment adoption timing (especially when blinding is challenging or infeasible) can inadvertently change their behaviors in anticipation of the forthcoming intervention. We provide an analytical framework to address the anticipation effect in SW-CRTs and study its impact. We derive expectations of the estimators based on a collection of linear mixed models and demonstrate that when the anticipation effect is ignored, these estimators give biased estimates of the treatment effect. We also provide updated sample size formulas that explicitly account for anticipation effects, exposure-time heterogeneity, or both in SW-CRTs and illustrate their impact on study power. Through simulation studies and empirical analyses, we compare the treatment effect estimators with and without adjusting for anticipation, and provide some practical considerations.}

\keywords{Anticipation effect, bias formula, exposure time, linear mixed models, model misspecification, sample size calculation.}

\jnlcitation{\cname{%
\author{Wang H, Chen X, Courtright KR, Halpern SD, Harhay MO, Taljaard M, Li F}} (\cyear{2025}),
\ctitle{On anticipation effect in stepped wedge cluster randomized trials}, \cjournal{Statistics in Medicine}, \cvol{00;00:0000--0000}.}

\maketitle

\section{Introduction}

In stepped wedge cluster randomized trials (SW-CRTs), the intervention is rolled out to clusters sequentially over multiple periods.\cite{Hussey2007} Each cluster is randomly assigned a treatment adoption time, and all individuals within that cluster receive the same treatment during each period. The SW-CRTs can be cross-sectional, closed-cohort, or open-cohort, depending on whether different individuals are recruited in each period for any cluster.\cite{Copas2015} Because all clusters eventually receive the treatment, SW-CRTs may be preferred for study recruitment when there is a need to ensure a complete roll-out of the intervention prior to the end of the study or when the intervention is perceived to be beneficial by stakeholders; see, for example, four broad justifications of this design given in Hemming and Taljaard.\cite{Hemming2020} Figure \ref{figure_sw} illustrates an example SW-CRT diagram with 8 clusters. In practice, SW-CRTs may only involve a limited number of clusters; the median number of clusters randomized was found to be $11$ in a recent systematic review by Nevins et al.\cite{Nevins2024}

\begin{figure}[t]
    \centering
    \begin{tikzpicture}
        \definecolor{controlcolor}{RGB}{255, 255, 255}
        \definecolor{treatmentcolor}{RGB}{202, 201, 237}
        \definecolor{anticipationcolor}{RGB}{242, 126, 126}
    
        \matrix (sw) [matrix of nodes, nodes in empty cells,
            nodes={draw, minimum width=2cm, minimum height=0.5cm, anchor=center},
            column sep=3mm, row sep=3mm,
            column 1/.style={nodes={draw=none}},
            row 1/.style={nodes={draw=none}},
            ]{
                & \textbf{Period 1} & \textbf{Period 2} & \textbf{Period 3} & \textbf{Period 4} & \textbf{Period 5} \\
                \textbf{Cluster 1} & |[fill=anticipationcolor]| & |[fill=treatmentcolor]| & |[fill=treatmentcolor]| & |[fill=treatmentcolor]| & |[fill=treatmentcolor]| \\
                \textbf{Cluster 2} & |[fill=anticipationcolor]| & |[fill=treatmentcolor]| & |[fill=treatmentcolor]| & |[fill=treatmentcolor]| & |[fill=treatmentcolor]| \\
                \textbf{Cluster 3} & |[fill=controlcolor]| & |[fill=anticipationcolor]| & |[fill=treatmentcolor]| & |[fill=treatmentcolor]| & |[fill=treatmentcolor]| \\
                \textbf{Cluster 4} & |[fill=controlcolor]| & |[fill=anticipationcolor]| & |[fill=treatmentcolor]| & |[fill=treatmentcolor]| & |[fill=treatmentcolor]| \\
                \textbf{Cluster 5} & |[fill=controlcolor]| & |[fill=controlcolor]| & |[fill=anticipationcolor]| & |[fill=treatmentcolor]| & |[fill=treatmentcolor]| \\
                \textbf{Cluster 6} & |[fill=controlcolor]| & |[fill=controlcolor]| & |[fill=anticipationcolor]| & |[fill=treatmentcolor]| & |[fill=treatmentcolor]| \\
                \textbf{Cluster 7} & |[fill=controlcolor]| & |[fill=controlcolor]| & |[fill=controlcolor]| & |[fill=anticipationcolor]| & |[fill=treatmentcolor]| \\
                \textbf{Cluster 8} & |[fill=controlcolor]| & |[fill=controlcolor]| & |[fill=controlcolor]| & |[fill=anticipationcolor]| & |[fill=treatmentcolor]| \\
            };
    
        \foreach \i in {2,3,4,5,6,7,8,9} {
            \foreach \j in {2,3,4,5,6} {
                \draw (sw-\i-\j.north west) rectangle (sw-\i-\j.south east);
            }
        }
    \end{tikzpicture}
    \caption{An illustration of a standard SW-CRT with 8 clusters and 5 periods of equal length. The design includes 4 treatment sequences, each containing 2 clusters. Clusters in treatment sequence $q\in\{1,2,3,4\}$ start treatment in period $j=q+1$, but may experience an anticipation effect in period $j=q$. The purple cell indicates the treatment condition, the white cell indicates the control condition, and the red cell indicates a possible cluster-period in which anticipation might have occurred.}
    \label{figure_sw}
\end{figure}

A standard approach for analyzing SW-CRTs utilizes the linear mixed model, including a period effect (adjusting for the secular trend), treatment effects (including parameters of interest), and sources of heterogeneity (accounting for within-cluster correlations).\cite{Li2020} Within this framework, the most popular choice in practice is the random intercept model by Hussey and Hughes,\cite{Hussey2007} with a categorical period effect and a constant treatment effect. This model implicitly assumes that the full effect of the treatment is reached immediately and remains constant throughout. However, depending on the type of intervention, context, and population, this assumption may not always hold. There has been a flourishing line of literature that addresses the implications for treatment effect structure misspecification under stepped wedge designs. For instance, Hughes et al.\cite{Hughes2015} first suggested several fixed-effects parametrizations for the treatment effect curves, including the delayed, linear, and nonlinear structures, to estimate time-on-treatment effects; similar treatment effect structures were also reviewed in Li et al.\cite{Li2020} Nickless et al.\cite{Nickless2018} conducted a simulation study in a setting where outcomes could be affected cumulatively by the exposure time---the time elapsed since the treatment was first introduced to a cluster. They found that the exposure time indicator (ETI) model, which includes a categorical term for each level of exposure time, consistently achieved better coverage and lower bias for estimating the time-averaged treatment effect (TATE) compared to other parametric formulations, although the ETI model had less efficiency and wider confidence intervals. More recently, Kenny et al.\cite{Kenny2022} found that when the true treatment effect depends on the exposure time, misspecification of the treatment effect structure with a constant treatment effect assumption can lead to severely biased estimates, which can even converge to a value of opposite sign to the true TATE; they suggested the ETI model and its variants for more robust inference under time-varying effects. To improve upon the efficiency of the ETI model in the presence of a larger number of periods, Maleyeff et al.\cite{Maleyeff2022} proposed a random-effects counterpart of the ETI model, as well as a permutation testing procedure to detect exposure-time treatment effect heterogeneity. Under the potential outcomes framework, Wang et al.\cite{Wang2024} proved that the linear mixed model is robust against misspecification of the covariate effects, the random-effects structure, and the error structure, as long as the treatment effect structure is correctly specified, highlighting that the stepped wedge design, in general, has a more stringent requirement for model specification compared to a traditional parallel-arm design. Beyond exposure-time treatment effect heterogeneity, Chen and Li \cite{Chen2024} proposed estimands when there is potential treatment effect heterogeneity over calendar time and informative cluster sizes; they provided an independent estimating equations estimator for model-assisted inference about each estimand in SW-CRTs. Lee et al.\cite{Lee2024} further demonstrated that the constant treatment effect model carries less bias when the treatment effect heterogeneity depends on calendar-time rather than exposure-time.

Almost all previous work has focused on analytic models where the treatment effect is only present after the treatment adoption, under the assumption of no anticipation. This assumption, however, may not always hold in practice because stakeholders, providers, or individuals who are aware of the treatment adoption timing (especially when blinding is challenging or infeasible) can inadvertently change their behaviors in anticipation of the forthcoming intervention. Formally, we define the anticipation effect as the potential for cluster members to change their behavior prior to the intervention adoption simply because they have foreknowledge of the intervention timing. This phenomenon has been discussed in previous empirical studies. For instance, Pfingsten et al.\citep{Pfingsten2001} conducted a randomized controlled trial to investigate whether anticipation of pain and fear-avoidance beliefs led to behavioral avoidance; in that context, patients with chronic low back pain who were informed to expect pain from a task had lower performance levels than those who were not given such information, although objectively this could not happen. As another example, students in a school anticipating the introduction of a new teaching method may study more diligently in expectation of improved learning conditions according to expectancy-value theory.\citep{Meyer2019} Similar anticipatory responses have been discovered in studies that examined behavior changes in response to expected insurance price adjustments,\cite{AronDine2015} and reported in a variety of contexts, including microeconomics,\cite{Malani2015} neuroeconomics,\cite{Knutson2008} and crime control studies.\cite{Ariel2021} In stepped wedge designs, it is always recommended to avoid anticipation effects through careful study design and monitoring, such as blinding clusters to treatment adoption timing and only revealing the randomization sequence shortly prior to the introduction of the intervention. However, a systematic review by Taljaard et al.\cite{Taljaard2017} has found that most SW-CRTs lack full transparency in reporting the blinding throughout the trial. Furthermore, compared to parallel-arm designs, the randomized staggered rollout of clusters introduces additional implementation challenges, which could make blinding impractical from time to time; see an example in Hemming et al,\cite{Hemming2018} where the nature of the intervention makes complete blinding impossible. In general, it is necessary to separate the true intervention effect from changes in outcomes caused by anticipation alone.

In this article, we investigate the implications of the anticipation effect when SW-CRTs are analyzed by linear mixed models. We systematically examine the behaviors of different linear mixed model formulations and estimators in the following settings: 1) when an anticipation effect exists, 2) when an exposure time-varying treatment effect exists, and 3) a combination of 1) and 2). Under these settings, we exhaust 16 combinations of true and working linear mixed models (Table \ref{tab::true_vs_working_models}), which are then classified into four scenarios: correct specification, over-specification, under-specification, and mismatch. Because correct specification scenarios do not introduce bias and over-specification scenarios generally lead to efficiency loss in analysis without introducing bias, we will focus on the remaining seven scenarios where the working model misspecifies the true model, i.e., the under-specification and mismatch scenarios. Among these, we find that only one under-specification scenario has been previously studied in Kenny et al.,\cite{Kenny2022} and there has been little prior work on the remaining scenarios. To quantify the impact under misspecification and provide theoretical insights, we derive a series of formulas that characterize the direction and magnitude of bias under six misspecified scenarios. For analytical tractability and clear interpretations, we assume a first-order anticipation effect, which only occurs one period before the intervention adoption (Figure \ref{figure_sw}). However, a higher-order anticipation effect may exist, i.e., the anticipation effect may occur more than one period before the intervention adoption. Therefore, we provide additional discussions on higher-order anticipation effects in Section \ref{sec::conclusion} and Web Appendix E. Finally, we also derive new sample size formulas that explicitly account for anticipation effects, exposure-time treatment effect heterogeneity, or both in SW-CRTs and illustrate the impact of anticipation on sample size and power.

The rest of this article is structured as follows. In Section \ref{sec::setup}, we introduce the estimands and different model formulations. In Section \ref{sec::est}, we analytically examine the behaviors of linear mixed model estimators under misspecification. In Section \ref{sec::sample_size}, we derive sample size formulas accounting for the anticipation effect in SW-CRTs and compare statistical power when the anticipation effect is unavoidable. In Section \ref{sec::simulation}, we report the results of simulation studies comparing the performance of different linear mixed model estimators under different true data-generating models. In Section \ref{sec::da}, we provide a reanalysis of a completed SW-CRT dataset accounting for anticipation. In Section \ref{sec::conclusion}, we conclude with possible extensions and areas for future research.

\section{Setting, estimands, and model formulations}\label{sec::setup}

\subsection{Setting and notation}

\begin{table}[t]
    \centering
    \caption{Expressions of the TATE $\Delta$ and the point treatment effect $\delta(s_{ij})$ under models given in \eqref{model::HH}-\eqref{model::ETI-ANT}.}
    \label{tab::estimand}
    \begin{threeparttable}
    \begin{tabular}{cccc}
    \toprule
    & \multicolumn{2}{c}{\textbf{Model expressions}}\\
    \cmidrule{2-3}
    \textbf{Estimand} & \textbf{HH / HH-ANT} & \textbf{ETI / ETI-ANT} \\
    \hline
    $\Delta$ & $\delta$ & $Q^{-1}\sumq \delta(q)$ \\
    $\delta(1)$ & $\delta$ & $\delta(1)$ \\
    $\delta(2)$ & $\delta$ & $\delta(2)$ \\
    \vdots & \vdots & \vdots \\
    $\delta(J-1)$ & $\delta$ & $\delta(J-1)$ \\
    \bottomrule
    \end{tabular}\smallskip
    \end{threeparttable}
\end{table}

We consider a cross-sectional and complete SW-CRT with $I$ clusters (indexed by $i\in\{1,\ldots,I\}$), $J$ equally spaced periods (indexed by $j\in\{1,\ldots,J\}$), and $K_{ij}$ individuals per cluster-period (indexed by $k\in\{1,\ldots,K_{ij}\}$). All clusters are in the control condition when $j=1$ and the treatment condition when $j=J$. Let $Y_{ijk}$ be the observed outcome for individual $k$ from cluster $i$ in period $j$, and $Z_{ij}$ be the binary treatment indicator which is equal to 1 when cluster $i$ in period $j$ is under treatment and 0 otherwise. There are $Q=J-1$ treatment sequences (indexed by $q\in\{1,\ldots,Q\}$) characterized by unique period indices when the clusters begin to receive treatment (treatment adoption time) and the number of clusters in the treatment sequence $q$ is $I_q$ with $\sumq I_q = I$. To focus on the main ideas, we consider SW-CRTs without a pre-planned transition or implementation period and return to a discussion on this topic in Section \ref{sec::conclusion}. If cluster $i$ is in treatment sequence $q$, where the treatment starts at period $j$, then its treatment indicator vector is $\bfZ_i=(Z_{i1},\ldots,Z_{iJ})'$ with $Z_{i1} = 0$, $Z_{iJ}=1$, and $Z_{ij'}=1$ if $Z_{ij} = 1$ for $j'>j$. Whenever applicable, the anticipation effect is assumed to exist $\ell$ periods before the treatment adoption time for each cluster. Let $\bfA_i = (A_{i1},A_{i2},\ldots,A_{iJ})'$ denote the anticipation indicator vector that depends on $\bfZ_i$ and $\ell$. Specifically, let $j^*\in\{1,\ldots,J\}$ be treatment adoption period---the period index with $Z_{ij^*}=1$ and $Z_{i,j^*-1}=0$ for cluster $i$. The anticipation effect indicator $A_{ij}$ is then formally defined as
\begin{align*}
    A_{ij} = \begin{cases}
                1, & j^*-\ell \leq j < j^*,\\
                0, & \text{otherwise.}
             \end{cases}
\end{align*}
For simplicity, we assume $\ell = 1$ hereafter (first-order anticipation), but our results can be generalized to higher-order anticipation where $\ell > 1$. We return to a discussion in Section \ref{sec::conclusion}. Under the first-order anticipation assumption, we use $s_{ij} = j - j^* + 1 \in \{1, \ldots, J-1\}$ to denote the exposure time of cluster $i$ in period $j$ for $j \geq j^*$.

\subsection{Estimands}

We first describe a linear mixed model with generic model representation in Li et al.\cite{Li2020} and then introduce specific model variants. The ingredients for a linear mixed model are
\begin{align}
    Y_{ijk} = \text{{secular trend}} + \text{{treatment effect}} + \text{{heterogeneity}}+\text{{residual error}}, \label{model::general}
\end{align}
where $Y_{ijk}$ is assumed to be continuous and normally distributed. Here, the secular trend term consists of design points for the background secular trend without treatment and period-specific parameters encoding systematic time effects; the treatment effect term is the change in the mean outcome at each period from each treatment sequence and includes the treatment effect parameter of interest; the heterogeneity term represents the cluster-specific, time-specific, and/or individual-specific departure from the cluster average, and is usually represented by random effects; the residual error term is independent and identically distributed as $\epsilon_{ijk} \sim \calN(0, \sigma^2)$. Section 3.3 in Li et al.\cite{Li2020} introduced possible treatment effect structures to represent the treatment effect, including the scenario where the treatment effect term can be expressed as an effect curve $\delta(s)$, with $s=j-q \in \{1,\ldots,J-1\}$ indicating the duration of exposure or exposure time. Following Kenny et al.\cite{Kenny2022} and Maleyeff et al.,\cite{Maleyeff2022} given $J$ periods and the maximum value of exposure time is $J-1$, this article focuses on addressing a primary estimand of interest---the (exposure) time-averaged treatment effect (TATE):
\begin{align*}
    \Delta = \frac{1}{J-1}\sums\delta(s).
\end{align*}
In this definition, $\delta(s)$ is the point treatment effect under exposure time $s$. In the special scenario where the treatment effect curve does not change over time, the treatment effect term becomes a constant, with $\delta(s) = \delta$ for $s \in \{1,\ldots,J-1\}$, and the TATE is identical to each point treatment effect regardless of exposure time (i.e., $\Delta=\delta$). Of note, the estimand $\Delta$ we focus on averages over all possible exposure times. However, this is not the only possible estimand, as one may be interested in the effect at a specific exposure time, or the effects averaged over certain sub-intervals, depending on the application.

\begin{table}[t]
    \centering
    \caption{Specification of true and working models with an anticipation effect and time-varying treatment effects (TVEs). A check mark ($\newcheckmark$) indicates the presence of an effect, and a cross mark ($\newcrossmark$) indicates the absence of an effect. In the ``References'' column, bias formulas are derived for five under-specification and two mismatch scenarios in Section \ref{sec::est}; closed-form variance expressions are provided for all working models under 16 scenarios in Section \ref{sec::sample_size}; simulation studies are conducted for 14 selected non-trivial scenarios in Section \ref{sec::simulation}.}\label{tab::true_vs_working_models}
    \resizebox{\linewidth}{!}{
    \begin{threeparttable}
        \begin{tabular}{ccccccccc}
            \toprule
            \multirow{2}{*}{\textbf{Index}} & \multicolumn{3}{@{}c@{}}{\textbf{True Data Generating Process}} & \multicolumn{3}{c}{\textbf{Working Specification}} & \multirow{2}{*}{\textbf{Scenario}} & \multirow{2}{*}{\textbf{References}}\\
            & Model & Anticipation & TVE & Model & Anticipation & TVE & \\
            \midrule
            1 & HH & \newcrossmark & \newcrossmark & HH & \newcrossmark & \newcrossmark & Correct & see Hussey and Hughes\cite{Hussey2007}\\
            2 & HH & \newcrossmark & \newcrossmark & HH-ANT & \newcheckmark & \newcrossmark & Over-specification & Section \ref{sec::sample_size} and \ref{sec::simulation_I}\\
            3 & HH & \newcrossmark & \newcrossmark & ETI & \newcrossmark & \newcheckmark & Over-specification & Section \ref{sec::sample_size}\\
            4 & HH & \newcrossmark & \newcrossmark & ETI-ANT & \newcheckmark & \newcheckmark & Over-specification & Section \ref{sec::sample_size}\\
            \midrule
            5 & HH-ANT & \newcheckmark & \newcrossmark & HH & \newcrossmark & \newcrossmark & Under-specification & Section \ref{sec::HH_1}, \ref{sec::sample_size}, and \ref{sec::simulation_II}\\
            6 & HH-ANT & \newcheckmark & \newcrossmark & HH-ANT & \newcheckmark & \newcrossmark & Correct & Section \ref{sec::sample_size} and \ref{sec::simulation_II}\\
            7 & HH-ANT & \newcheckmark & \newcrossmark & ETI & \newcrossmark & \newcheckmark & Mismatch & Section \ref{sec::ETI_1}, \ref{sec::sample_size}, and \ref{sec::simulation_II}\\
            8 & HH-ANT & \newcheckmark & \newcrossmark & ETI-ANT & \newcheckmark & \newcheckmark & Over-specification & Section \ref{sec::sample_size} and \ref{sec::simulation_II}\\
            \midrule
            9 & ETI & \newcrossmark & \newcheckmark & HH & \newcrossmark & \newcrossmark & Under-specification & see Kenny et al.\cite{Kenny2022}\\
            10 & ETI & \newcrossmark & \newcheckmark & HH-ANT & \newcheckmark & \newcrossmark & Mismatch & Section \ref{sec::HH-ANT_1}, \ref{sec::sample_size}, and \ref{sec::simulation_III}\\
            11 & ETI & \newcrossmark & \newcheckmark & ETI & \newcrossmark & \newcheckmark & Correct & Section \ref{sec::sample_size} and \ref{sec::simulation_III}\\
            12 & ETI & \newcrossmark & \newcheckmark & ETI-ANT & \newcheckmark & \newcheckmark & Over-specification & Section \ref{sec::sample_size} and \ref{sec::simulation_III}\\
            \midrule
            13 & ETI-ANT & \newcheckmark & \newcheckmark & HH & \newcrossmark & \newcrossmark & Under-specification & Section \ref{sec::HH_2}, \ref{sec::sample_size}, and \ref{sec::simulation_IV}\\
            14 & ETI-ANT & \newcheckmark & \newcheckmark & HH-ANT & \newcheckmark & \newcrossmark & Under-specification & Section \ref{sec::HH-ANT_2}, \ref{sec::sample_size}, and \ref{sec::simulation_IV}\\
            15 & ETI-ANT & \newcheckmark & \newcheckmark & ETI & \newcrossmark & \newcheckmark & Under-specification & Section \ref{sec::ETI_2}, \ref{sec::sample_size}, and \ref{sec::simulation_IV}\\
            16 & ETI-ANT & \newcheckmark & \newcheckmark & ETI-ANT & \newcheckmark & \newcheckmark & Correct & Section \ref{sec::sample_size} and \ref{sec::simulation_IV}\\
            \bottomrule
        \end{tabular}\smallskip
    \end{threeparttable}
    }
\end{table}

\subsection{Model Formulations} \label{sec::formulation}

For simplicity, we assume $Y_{ijk}$ is continuous and primarily focus on the random intercept model to deliver our key ideas. We provide a discussion on alternative random-effects specifications in Section \ref{sec::conclusion}. We define the secular trend term as $\mu + \beta_2 \bbone(j=2)+\ldots+\beta_J \bbone(j=J)$ and the heterogeneity term as $\alpha_i$, where $\mu$ is the grand mean, $\beta_j$ is the secular trend parameter for period $j$ ($\beta_1=0$ for identifiability), and $\alpha_i\sim\calN(0,\tau^2)$ is a cluster-level random effect for cluster $i$. The Hussey and Hughes (HH) model for the constant treatment effect is written as:\cite{Hussey2007}
\begin{align}
    Y_{ijk} = \mu + \beta_j + \delta Z_{ij} + \alpha_i + \epsilon_{ijk}, \label{model::HH}
\end{align}
where $\delta$ is the treatment effect parameter and $\epsilon_{ijk}\sim\calN(0,\sigma^2)$ is the residual error term. To accommodate the anticipation effect, the HH model is modified by including the anticipation indicator (HH-ANT):
\begin{align}
    Y_{ijk} = \mu + \beta_j + \gamma A_{ij} + \delta Z_{ij} + \alpha_i + \epsilon_{ijk}, \label{model::HH-ANT}
\end{align}
where $\gamma$ is the anticipation effect parameter. The HH and HH-ANT models assume no exposure-time heterogeneity in the treatment effects. To allow exposure-time heterogeneity, we also introduce the exposure time indicator (ETI) model:\cite{Nickless2018,Granston2014}
\begin{align}
    Y_{ijk} = \mu + \beta_j + \delta(s_{ij}) Z_{ij} + \alpha_i + \epsilon_{ijk}, \label{model::ETI}
\end{align}
where $\delta(s_{ij})$ is the treatment effect with exposure time $s_{ij}$ observed in cluster-period $(i,j)$. Note that $\delta(s_{ij})$ is unique for each exposure time and can capture any functional form of exposure-time treatment effect heterogeneity. Similarly, we can add the anticipation effect to the ETI model (ETI-ANT):
\begin{align}
    Y_{ijk} = \mu + \beta_j + \gamma A_{ij} + \delta(s_{ij}) Z_{ij} + \alpha_i + \epsilon_{ijk}. \label{model::ETI-ANT}
\end{align}
In Table \ref{tab::estimand}, we list the expressions of TATE $\Delta$ and point treatment effect $\delta(s)$ under models given in \eqref{model::HH}-\eqref{model::ETI-ANT}, and then exhaust all 16 combinations of true and working model specification scenarios in Table \ref{tab::true_vs_working_models}. The purpose of Table \ref{tab::true_vs_working_models} is to provide an overview of scenarios when exposure time and/or anticipation are possibly ingredients in modeling the treatment effect structure under the general model \eqref{model::general}. Among the scenarios, 4 scenarios include correct model specification, and hence, no bias arises; 5 scenarios include model over-specification (in the treatment effect structure), and hence, no bias arises except for possible efficiency loss. We focus on the remaining 7 scenarios where the working model misspecifies the true model, i.e., under-specification and mismatch. Among these, only one under-specification scenario has been previously studied in Kenny et al.\cite{Kenny2022} To fill the knowledge gap, we next examine the behaviors of the linear mixed model estimators under the remaining six scenarios, i.e., when either or both 1) the assumption of no anticipation and 2) the assumption of no exposure-time heterogeneity are violated.

\section{Behaviors of linear mixed model estimators under model misspecification} \label{sec::est}

In this section, to generate insights into the behaviors of linear mixed model estimators under model misspecification, we consider a standard cross-sectional and complete SW-CRT. Specifically, we assume $I$ is divisible by $Q$ and treatment sequence $q$ has $I_q = I/Q$ clusters for $q \in \{1,\ldots,Q\}$. The treatment adoption time for clusters in treatment sequence $q$ is in period $j=q+1$. Furthermore, we assume equal numbers of individuals per cluster in each period $K_{ij}=K$ for all $i$ and $j$.

\subsection{Working Model: HH}\label{sec::HH}

We first consider the scenario where the working model is HH, which does not account for the anticipation effect or exposure-time heterogeneity, and examine the behaviors of the HH model estimator when the true model is either HH-ANT or ETI-ANT. For notational clarity, we denote the treatment effect estimator under the HH model as $\widehat{\delta}^{\text{HH}}$, and use the superscript to explicitly represent parameters in a specific model.

\subsubsection{True Model: HH-ANT}\label{sec::HH_1}

For ease of notation, we use $\phi = \tau^2 / (\tau^2 + \sigma^2 / K)$ to denote the correlation of the cluster-period means---a quantity that depends on $K$.\cite{Matthews2017} This quantity is different from the intra-cluster correlation coefficient (ICC), $\rho = \tau^2/(\tau^2 + \sigma^2)$, and $\phi > \rho$ when $K > 1$. Theorem \ref{theorem::HH_1} gives the expectation of the estimator $\widehat{\delta}^{\text{HH}}$, which is biased even for the constant treatment effect when the anticipation effect is present.

\begin{theorem}\label{theorem::HH_1}
    \emph{In a standard SW-CRT with data generated from the HH-ANT model, the expectation of the under-specified HH model estimator $\widehat{\delta}^{\text{HH}}$ is
    \begin{align}
        \bbE(\widehat\delta^\HH) = \delta^\HHANT+\omega_\HH^{\HHANT} \gamma^\HHANT, \label{theorem::HH_1_eqn}
    \end{align}
    where $\delta^\HHANT$ is the true treatment effect and $\gamma^\HHANT$ is the true anticipation effect. The weight of $\gamma^\HHANT$ is given by
    \begin{align*}
        \omega_\HH^\HHANT =-\frac{6(1+\phi Q)}{(Q+1)(2+\phi Q)}.
    \end{align*}}
\end{theorem}

The weight, $\omega_\HH^{\HHANT}$, is negative because $\phi > 0$, indicating that, without accounting for the anticipation effect, the HH model estimator $\widehat\delta^\HH$ under- or overestimates the treatment effect depending on the direction of the anticipation effect. The magnitude of the bias depends on design parameters $\phi$ and $Q$, and the true anticipation effect $\gamma^\HHANT$, and this magnitude becomes larger when $Q$ is smaller, $\phi$ is closer to 1, and $\gamma^\HHANT$ is larger. A numerical example illustrating the bias is given in Figure \ref{figure_HH}(a). In Figure \ref{figure_HH}(a), we assume a constant treatment effect and a positive anticipation effect, presenting both the true effect curve and the estimated effect curves under the HH working model; clearly, the results confirm that $\widehat\delta^\HH$ underestimates $\delta^\HHANT$.

\begin{figure}[t]
    \centering
    \includegraphics[width=\linewidth]{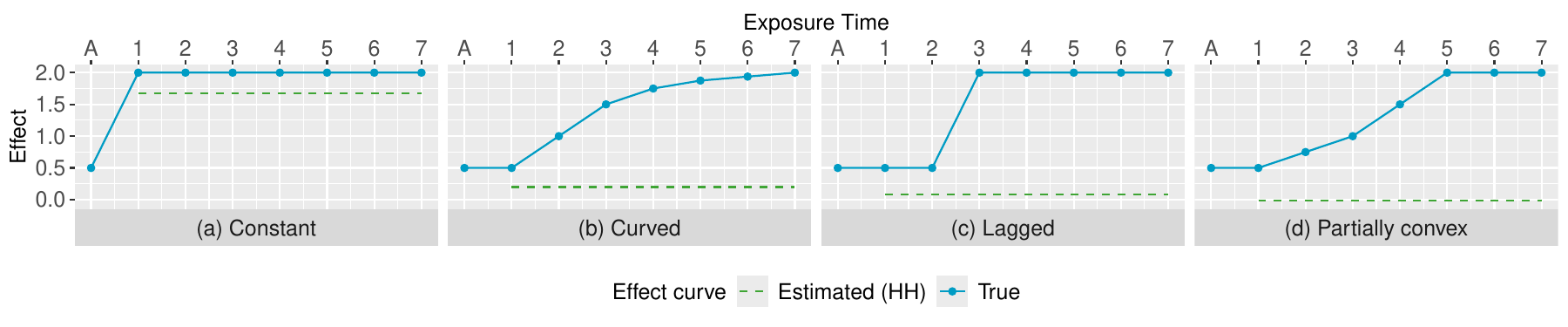}
    \caption{Four types of true treatment effect curves with their estimated effect curves under the HH working model. We consider standard cross-sectional SW-CRTs with $I = 28$, $J = 8$, and $K = 50$, which gives $Q=7$ treatment sequences with $I_q=4$ clusters each and $\phi \approx 0.80$. The true data-generating processes are HH-ANT (constant, panel (a)) or ETI-ANT (curved, lagged, partially convex, panels (b)-(d)), where $\mu = 14$, $\beta_j = 0.5\sin\{2\pi (j-1)/7\}$, $\tau = 0.141$, $\sigma = 0.5$, $\gamma^\HHANT = \gamma^\ETIANT = 0.5$, and $\ell = 1$. The anticipation effect occurs at exposure time $A$, and the point treatment effects occur when exposure times range from 1 to 7. For the constant treatment effect, $\delta^\HHANT = 2$ for $s\in\{1,\ldots,7\}$. For the curved treatment effect, $\delta^\ETIANT(1) = 0.5$, $\delta^\ETIANT(s) = 2-0.5^{s-2}$ for $s\in\{2,\ldots,6\}$, and $\delta^\ETIANT(7) = 2$. For the lagged treatment effect, $\delta^\ETIANT(s) = 0.5$ for $s\in\{1,2\}$ and $\delta^\ETIANT(s) = 2$ for $s\in\{3,\ldots,7\}$. For the partially convex treatment effect, $\delta^\ETIANT(s) = 0.25+0.25s$ for $s\in\{1,2,3\}$, $\delta^\ETIANT(4) = 1.5$, and $\delta^\ETIANT(s) = 2$ for $s\in\{5,6,7\}$. The blue line is the true effect curve, and the green dashed line is the estimated effect curve using an HH model via 2,000 simulated datasets.} \label{figure_HH}
\end{figure}

\subsubsection{True Model: ETI-ANT}\label{sec::HH_2}

When the true model is ETI-ANT, the assumptions of no anticipation and no exposure-time heterogeneity are both violated under the HH working model. Theorem \ref{theorem::HH_2} gives a closed-form expression for the HH model estimator and its expectation when the true model is ETI-ANT.

\begin{theorem}\label{theorem::HH_2}
    \emph{For a standard SW-CRT with data generated from the ETI-ANT model, we denote the mean outcome in treatment sequence $q \in \{1, \ldots, Q\}$ in period $j$ by $\overline Y_j(q)$. Then, the HH estimator $\widehat\delta^\HH$ can be expressed as:
    \begin{align*}
        \widehat\delta^\HH = \frac{12(1 + \phi Q)}{Q(Q+1)(\phi Q^2 + 2Q - \phi Q - 2)} \sumq \sumj \left[ Q\bbone(j > q) - j + 1 + \frac{\phi Q(2q - Q - 1)}{2(1+\phi Q)} \right] \overline Y_j(q).
    \end{align*}
    Furthermore, the expectation of $\widehat\delta^\HH$ can be expressed as a weighted average:
    \begin{align}
        \bbE(\widehat\delta^\HH) = \sumjp \left\{\pi_\HH^\ETIANT(j) \delta^\ETIANT(j) + \omega_\HH^{\ETIANT}(j) \gamma^\ETIANT \right\}, \label{theorem::HH_2_eqn}
    \end{align}
    where $\delta^\ETIANT(j)$ is the true point treatment effect with exposure time $j \in \{1, \ldots, J-1\}$ and $\gamma^\ETIANT$ is the true anticipation effect. The weights of $\delta^\ETIANT(j)$ and $\gamma^\ETIANT$ are given by
    \begin{align*}
        \pi_\HH^\ETIANT(j)=\frac{6(j - Q - 1)\{(1 + 2\phi Q)j - (1 + \phi + \phi Q)Q\}}{Q(Q + 1)(\phi Q^2 + 2Q - \phi Q - 2)},
    \end{align*}
    and
    \begin{align*}
        \omega_\HH^\ETIANT(j)=- \frac{6(\phi Q^2 - \phi Q + 2j - 2)}{Q(\phi Q^3 + 2Q^2 - \phi Q - 2)},
    \end{align*}
    respectively.}
\end{theorem}

Interestingly, the weight, $\pi_\HH^\ETIANT(j)$, is identical to that derived in Kenny et al.\cite{Kenny2022} assuming the true model is ETI, and hence following their results, it is possible that $\pi_\HH^\ETIANT(j) < 0$ for some $j\in\{1,\ldots,J\}$. When the true model is ETI-ANT, the anticipation effect introduces $\gamma^\ETIANT$ into \eqref{theorem::HH_2_eqn}, with its weight $\omega_\HH^{\ETIANT}(j)<0$ and $0<|\omega_\HH^\ETIANT(j)|<1$ for all $j\in\{1,\ldots,J\}$. Figure \ref{figure_coeff_HH} illustrates the relationships between weights, $\pi_\HH^\ETIANT(j)$ and $\omega_\HH^\ETIANT(j)$, and design parameters, $\phi$ and $Q$, where the absolute values of all weights increase as $\phi$ increases from 0 to 1. Of note, for $\pi_\HH^\ETIANT(j)$, we see that three out of seven weights are negative as $\phi \to 1$ when $Q = 7$; for $\omega_\HH^\ETIANT(j)$, each weight becomes more negative as $\phi \to 1$. 

To provide more insights, we assume that the true anticipation effect $\gamma^\ETIANT > 0$ (similar interpretations hold for $\gamma^\ETIANT < 0$). Using numerical examples (shown in Figures \ref{figure_HH}(b)-(d)), when the exposure-time heterogeneity (curved, lagged, partially convex) exists, both the estimated effect curves under the HH working model underestimate the true effect curve and lie entirely below it. In the partially convex scenario, the estimated effect curve is negative or near zero even when all the point treatment effects are positive. Although $\pi_\HH^\ETIANT(1)$ is positive and largest in magnitude (see Figure \ref{figure_coeff_HH}), the first exposure time has minimal contribution because $\delta^\ETIANT(1) = \gamma^\ETIANT = 0.5$ is close to zero, and negative weights at later exposure times drive $\widehat\delta^\HH$ toward zero or even into negative values. This implies that if we use the HH working model to estimate the TATE when the true model is ETI-ANT, the estimator $\widehat\delta^\HH$ might give an estimate of the opposite sign to the true TATE. This finding is consistent with that in Kenny et al.\cite{Kenny2022} in the absence of anticipation. Furthermore, given $\omega_\HH^\ETIANT(j)<0$, the positive anticipation effect worsens the downward biases, and hence ignoring anticipation can introduce a higher bias toward the opposite direction on top of the bias due to ignoring exposure-time treatment effect heterogeneity.

\begin{figure}[t]
    \centering
    \includegraphics[width=\linewidth]{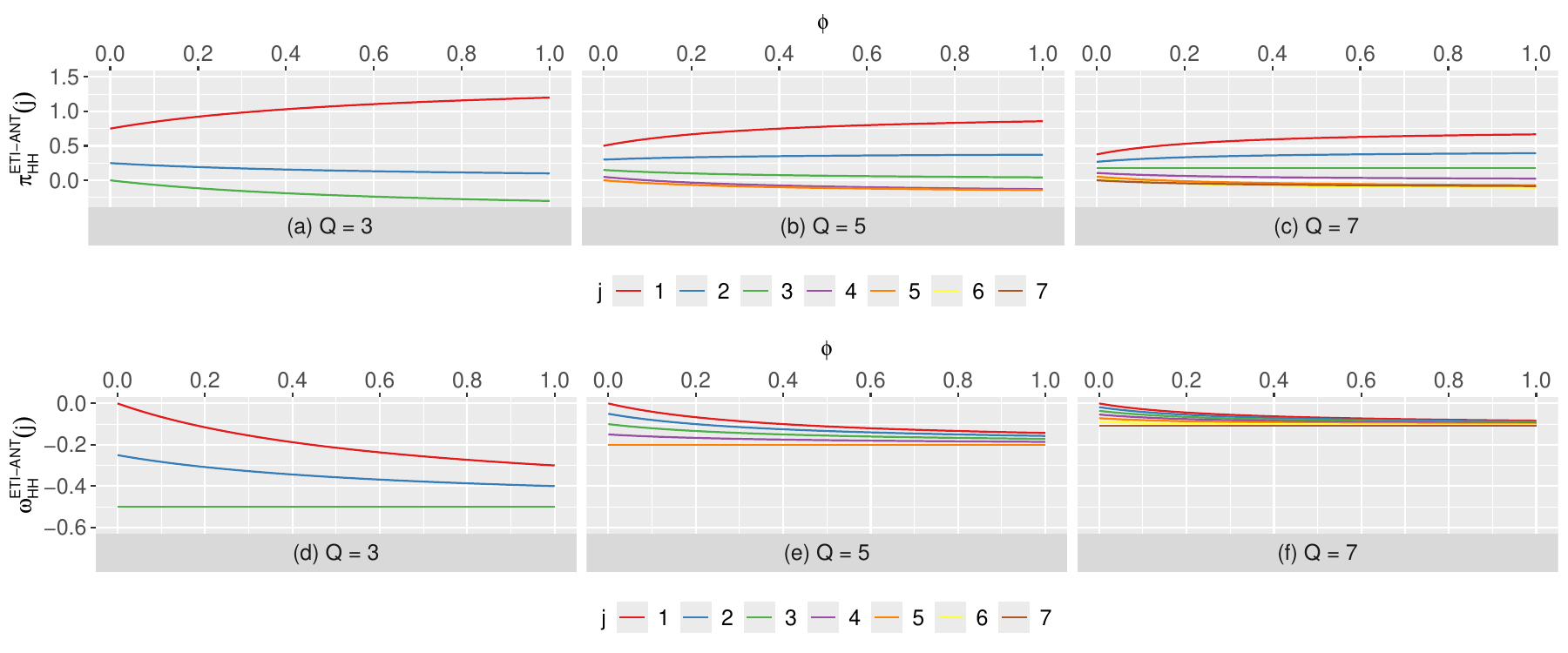}
    \caption{Weights $\pi_\HH^\ETIANT(j)$ for $j\in\{1,\ldots,J-1\}$ in $\bbE(\widehat\delta^\HH)$ under the HH working model (panels (a)-(c)), and $\omega_\HH^\ETIANT(j)$ for $j\in\{1,\ldots,J-1\}$ in $\bbE(\widehat\delta^\HH)$ under the HH working model (panels (d)-(f)), assuming the true model is ETI-ANT. Here, $Q\in\{3, 5, 7\}$ and $\phi\in[0,1]$.}\label{figure_coeff_HH}
\end{figure}

\subsection{Working Model: ETI}\label{sec::ETI}

We next consider the scenario where the working model is ETI, which accounts for exposure-time heterogeneity but not for the anticipation effect, and examine the behaviors of the ETI model estimators when the true model is either HH-ANT or ETI-ANT. For simplicity of exposition, we discuss via an example of a standard SW-CRT with $J=3$ and $Q=2$, which is sufficient to provide the key insights. The results can be extended to more general settings.

\subsubsection{True Model: HH-ANT}\label{sec::ETI_1}

When the true model is HH-ANT, the assumption of no anticipation is violated under the ETI working model. In this scenario of mismatch, Theorem \ref{theorem::ETI_1} gives the expectations of the ETI model estimators $\widehat\delta^\ETI(1)$ and $\widehat\delta^\ETI(2)$ when the true model is HH-ANT.

\begin{theorem}\label{theorem::ETI_1}
    \emph{For a standard SW-CRT with data generated from the HH-ANT model, the expectations of the ETI model estimators $\widehat\delta^\ETI(1)$ and $\widehat\delta^\ETI(2)$ are
    \begin{align}
        \bbE\{\widehat\delta^\ETI(1)\} = \delta^\HHANT-(1+\phi)\gamma^\HHANT \quad\text{and}\quad\bbE\{\widehat\delta^\ETI(2)\} = \delta^\HHANT-(1+2\phi)\gamma^\HHANT, \label{theorem::ETI_1_eqn}
    \end{align}
    where $\delta^\HHANT$ is the true treatment effect and $\gamma^\HHANT$ is the true anticipation effect.}
\end{theorem}

The weights of $\gamma^\HHANT$ in \eqref{theorem::ETI_1_eqn} are negative, indicating that, without accounting for the anticipation effect, the ETI model estimators $\widehat\delta^\ETI(1)$ and $\widehat\delta^\ETI(2)$ under- or overestimate the treatment effect $\delta^\HHANT$ depending on the direction of the anticipation effect. Moreover, Theorem \ref{theorem::ETI_1} implies that $|\bbE\{\widehat\delta^\ETI(1)\} - \delta^\HHANT| < |\bbE\{\widehat\delta^\ETI(2)\} - \delta^\HHANT|$ (since $\phi > 0$). That is, there is more bias in $\widehat\delta^\ETI(2)$ than in $\widehat\delta^\ETI(1)$ for estimating $\delta^\HHANT$. Of note, both $|\bbE\{\widehat\delta^\ETI(1)\} - \delta^\HHANT|$ and $|\bbE\{\widehat\delta^\ETI(2)\} - \delta^\HHANT|$ increase as $|\gamma^\HHANT|$ increases and $\phi$ approaches 1. We illustrate these results via a numerical example in Figure \ref{figure_ETI}(a), where, when the treatment effect is constant, $\widehat\delta^\ETI(s)$ underestimates $\delta^\HHANT$ for all $s$. The estimated effect curves decline as the exposure time increases, indicating $\widehat\delta^\ETI(s)$ becomes more biased as $s$ increases. Similar patterns are observed from the results of a more general scenario with $J=8$ in Figure \ref{figure_ETI}(e).

\begin{figure}[t]
    \centering
    \includegraphics[width=\linewidth]{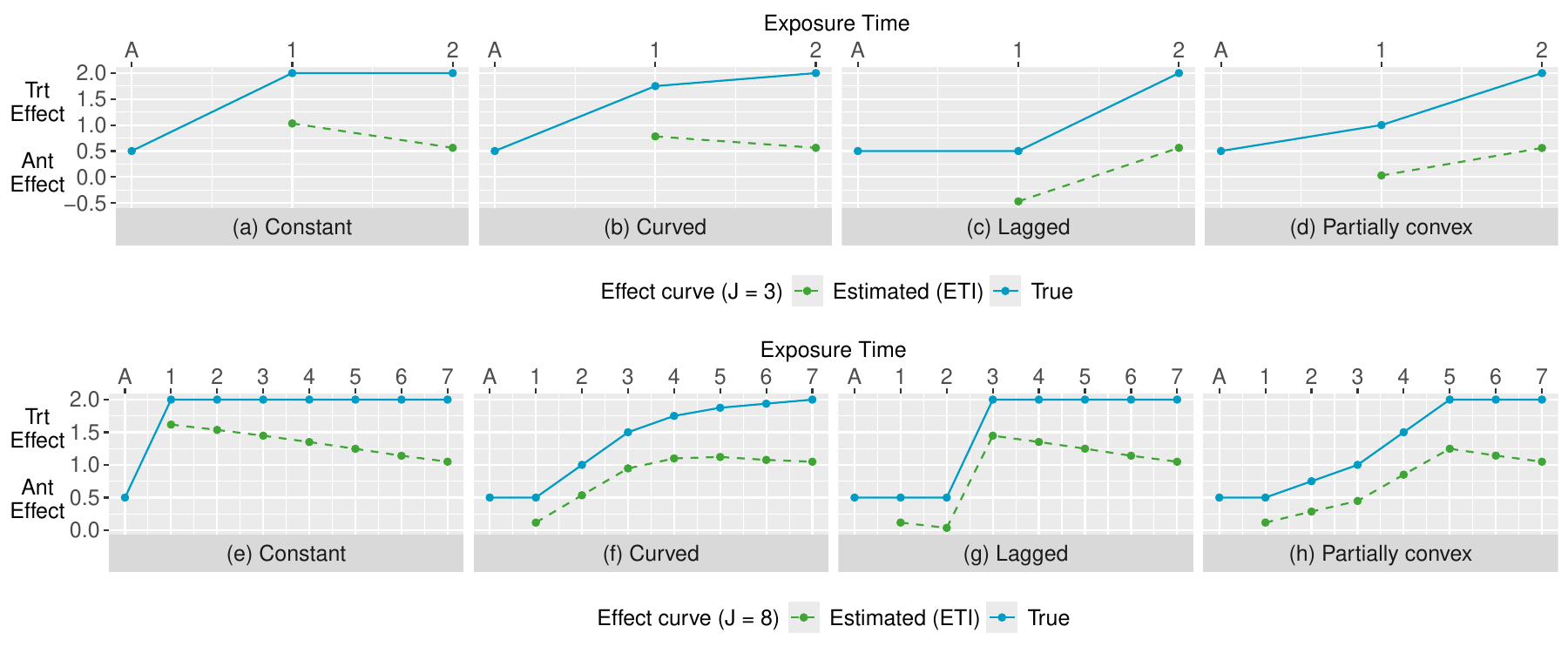}
    \caption{Four types of true treatment effect curves with their estimated effect curves under the ETI working model. We consider standard cross-sectional SW-CRTs with $I = 28$, $J = 3$ for panels (a)-(d) (and $J = 8$ for panels (e)-(h)), and $K = 50$. When $J=3$, there are $Q=2$ treatment sequences with $I_q=14$ clusters each and $\phi \approx 0.80$. The true data-generating processes are HH-ANT (constant, panels (a) and (e)) or ETI-ANT (curved, lagged, partially convex, panels (b)-(d) and (f)-(h)), where $\mu = 14$, $\beta_j = 0.5\sin\{2\pi (j-1)/7\}$, $\tau = 0.141$, $\sigma = 0.5$, $\gamma^\HHANT = \gamma^\ETIANT = 0.5$, and $\ell = 1$. The anticipation effect occurs at exposure time $A$ (one period before the treatment adoption), and the point treatment effects occur at other periods. When the treatment effect is constant, $\delta^\HHANT = 2$ for $s\in\{1,2\}$. When the treatment effect is curved, $\delta^\ETIANT(1) = 1.75$ and $\delta^\ETIANT(2) = 2$. When the treatment effect is lagged, $\delta^\ETIANT(1) = 0.5$ and $\delta^\ETIANT(2) = 2$. When the treatment effect is partially convex, $\delta^\ETIANT(1) = 1$ and $\delta^\ETIANT(2) = 2$. The blue line is the true effect curve, and the green dashed line is the estimated effect curve using an ETI model via 2,000 simulated datasets. When $J=8$, the settings are the same as the ones in Figure \ref{figure_HH}.} \label{figure_ETI}
\end{figure}

\subsubsection{True Model: ETI-ANT}\label{sec::ETI_2}

When the true model is ETI-ANT, the assumption of no anticipation is violated under the ETI working model. Theorem \ref{theorem::ETI_2} gives expectations of the ETI model estimators $\widehat\delta^\ETI(1)$ and $\widehat\delta^\ETI(2)$ when the true model is ETI-ANT.

\begin{theorem}\label{theorem::ETI_2}
    \emph{For a standard SW-CRT with data generated from the ETI-ANT model, the expectations of the ETI model estimators $\widehat\delta^\ETI(1)$ and $\widehat\delta^\ETI(2)$ are
    \begin{align}
        \bbE\{\widehat\delta^\ETI(1)\} = \delta^\ETIANT(1)-(1+\phi)\gamma^\ETIANT \quad\text{and}\quad\bbE\{\widehat\delta^\ETI(2)\} = \delta^\ETIANT(2)-(1+2\phi)\gamma^\ETIANT, \label{theorem::ETI_2_eqn}
    \end{align}
    where $\delta^\ETIANT(j)$ is the true point treatment effect with exposure time $j \in \{1, 2\}$ and $\gamma^\ETIANT$ is the true anticipation effect.}
\end{theorem}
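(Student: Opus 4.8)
The plan is to mirror the proof of Theorem~\ref{theorem::ETI_1}, reusing verbatim the closed-form expressions for $\widehat\delta^\ETI(1)$ and $\widehat\delta^\ETI(2)$ obtained in Web Appendix~A.2.1, and changing only the mean structure that is substituted in. Under the true ETI-ANT model the cluster-period means satisfy $\bbE\{\overline Y_j(q)\} = \mu + \beta_j + \bbone(j=q)\gamma^\ETIANT + \bbone(j>q)\delta^\ETIANT(j-q)$, so the sole difference from the HH-ANT derivation is that the single parameter $\delta^\HHANT$ that multiplied $\bbone(j>q)$ is now replaced by the exposure-time-indexed parameters $\delta^\ETIANT(j-q)$.

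First I would substitute this mean into each closed form and split the resulting double sum over $(q,j)$ into four pieces, according to the factors $\mu$, $\beta_j$, $\bbone(j=q)\gamma^\ETIANT$, and $\bbone(j>q)\delta^\ETIANT(j-q)$. The $\mu$-piece and the $\beta_j$-piece are identical to the corresponding computations in the proof of Theorem~\ref{theorem::ETI_1} — those never touched the treatment part of the mean — so both vanish. Similarly, the $\gamma^\ETIANT$-piece coincides with the $\gamma^\HHANT$-piece there because the indicator $\bbone(j=q)$ is unchanged, yielding coefficient $-(1+\phi)$ for $\widehat\delta^\ETI(1)$ and $-(1+2\phi)$ for $\widehat\delta^\ETI(2)$.

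The only genuinely new computation is the $\delta^\ETIANT$-piece. With $J=3$ and $Q=2$, the pairs $(q,j)$ with $j>q$ are $(1,2)$ and $(2,3)$, both at exposure time $j-q=1$, together with $(1,3)$ at exposure time $j-q=2$. I would evaluate the weight attached to $\overline Y_j(q)$ in each closed form at these three cells, add the two weights at exposure time $1$ to get the coefficient of $\delta^\ETIANT(1)$, and read off the weight at $(1,3)$ as the coefficient of $\delta^\ETIANT(2)$. Direct substitution gives, for $\widehat\delta^\ETI(1)$, exposure-time-$1$ weights $1$ and $0$ (sum $1$) and exposure-time-$2$ weight $0$; and for $\widehat\delta^\ETI(2)$, exposure-time-$1$ weights $1$ and $-1$ (sum $0$) and exposure-time-$2$ weight $1$. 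Assembling the four pieces produces \eqref{theorem::ETI_2_eqn}.

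The main obstacle is purely bookkeeping: unlike the HH-ANT case, where all cells with $j>q$ collapsed into one coefficient, here one must track the exposure-time index $j-q$ when disaggregating the $\delta$-contributions and verify that the cross weights cancel exactly, so that $\widehat\delta^\ETI(1)$ isolates $\delta^\ETIANT(1)$ and $\widehat\delta^\ETI(2)$ isolates $\delta^\ETIANT(2)$ — that is, the ETI working model still targets each point effect, incurring only the same additive anticipation bias it would under HH-ANT. No conceptual difficulty remains beyond the arithmetic already laid out in A.2.1.
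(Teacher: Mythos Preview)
Your proposal is correct and follows essentially the same route as the paper: substitute the ETI-ANT mean into the closed forms from A.2.1, reuse the $\mu$, $\beta_j$, and $\gamma$ computations from Theorem~\ref{theorem::ETI_1} verbatim, and then handle the $\delta$-piece directly. Your cell-by-cell evaluation of the three treated cells is in fact slightly more explicit than the paper's version, which writes the mean with only the ``own'' exposure-time $\delta$ term and thus silently relies on the cross contributions (your $0$ at $(1,3)$ for $\widehat\delta^\ETI(1)$ and your $1+(-1)=0$ for $\widehat\delta^\ETI(2)$) vanishing---a fact you verify but the paper leaves implicit.
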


Theorem \ref{theorem::ETI_2} resembles Theorem \ref{theorem::ETI_1}, and also implies that $|\bbE\{\widehat\delta^\ETI(1)\}-\delta^\ETIANT(1)| < |\bbE\{\widehat\delta^\ETI(2)\}-\delta^\ETIANT(2)|$. These results can be illustrated via numerical examples given in Figures \ref{figure_ETI}(b)-(d). When the exposure-time heterogeneity (curved, lagged, partially convex) exists, the gap between the estimated and true curves widens as the exposure time increases, indicating $\widehat\delta^\ETI(s)$ becomes more biased for larger $s$. Similar patterns are observed from the results of a more general scenario with $J=8$ in Figures \ref{figure_ETI}(f)-(h).

\subsection{Working Model: HH-ANT}\label{sec::HH-ANT}

We next consider the scenario where the working model is HH-ANT, which accounts for the anticipation effect but not the exposure-time heterogeneity, and examine the behaviors of the HH-ANT model estimators when the true model is either ETI or ETI-ANT.

\subsubsection{True Model: ETI}\label{sec::HH-ANT_1}

When the true model is ETI, Theorem \ref{theorem::HH-ANT_1} provides a closed-form expression for the HH-ANT model estimators and their expectations.

\begin{theorem}\label{theorem::HH-ANT_1}
    \emph{For a standard SW-CRT with data generated from the ETI model, the HH-ANT estimators $\widehat\delta^\HHANT$ and $\widehat\gamma^\HHANT$ can be expressed as:
    \begin{align*}
        \widehat\delta^\HHANT &= \frac{6}{\phi Q(1 - Q^2) + (2Q^2 - 3Q + 1)(\phi Q + 1)}\displaybreak[0]\\
        &\quad\times\sumq\sumj \left[-\phi(Q - 2q + 1) + \frac{\phi Q + 1}{Q}\{Q\bbone(j=q) + 2Q\bbone(j>q) + \bbone(j=J) - 2j + 1\}\right]\overline Y_j(q),
    \end{align*}
    and
    \begin{align*}
        \widehat\gamma^\HHANT &=\frac{1}{(Q - 1)(\phi Q^2 - 2\phi Q + 2Q - 1)} \displaybreak[0]\\
        &\quad\times\sumq\sumj \left[(Q+1)(\phi Q + 2)\bbone(j=q)+6(\phi Q + 1)\bbone(j>q)+\frac{(Q+1)(\phi Q + 2)}{Q}\bbone(j = J)\right.\displaybreak[0]\\
        &\qquad\qquad\qquad\left.-4\phi Q - 6\phi j + 6\phi q + 2\phi - 2 - \frac{6j}{Q} + \frac{4}{Q}\right]\overline Y_j(q).
    \end{align*}
    Furthermore, the expectations of $\widehat\delta^\HHANT$ and $\widehat\gamma^\HHANT$ can be expressed as a weighted average of the point treatment effects:
    \begin{align}
        \bbE(\widehat\delta^\HHANT) = \sumjp \pi_\HHANT^\ETI(j) \delta^\ETI(j) \quad\text{and}\quad \bbE(\widehat\gamma^\HHANT) = \sumjp \psi_\HHANT^\ETI(j) \delta^\ETI(j),\label{theorem::HH-ANT_delta_gamma_1}
    \end{align}
    where $\delta^\ETI(j)$ is the true point treatment effect with exposure time $j \in \{1, \ldots, J-1\}$. The weights of $\delta^\ETI(j)$ for $\bbE(\widehat\delta^\HHANT)$ and $\bbE(\widehat\gamma^\HHANT)$ are given by
    \begin{align*}
        \pi_\HHANT^\ETI(j)&=\frac{6 \left( \phi Q^3 - 3 \phi Q^2 j + \phi Q^2 + Q^2 + 2 \phi Q j^2 - 2 \phi Q j + \phi Q - 2 Q j + j^2 \right)}{Q \left( \phi Q^3 - 3 \phi Q^2  + 2 Q^2 + 2 \phi Q - 3 Q + 1 \right)},
    \end{align*}
    and
    \begin{align*}
        \psi_\HHANT^\ETI(j)&=\frac{2\phi Q^3 - 8\phi Q^2 j + 5\phi Q^2 + Q^2 + 6\phi Qj^2 - 8\phi Qj + 3\phi Q - 4Qj + Q + 3j^2 - j}{Q(Q - 1)(\phi Q^2 - 2\phi Q + 2Q - 1)},
    \end{align*}
    respectively.}
\end{theorem}

\begin{figure}[t]
    \centering
    \includegraphics[width=\linewidth]{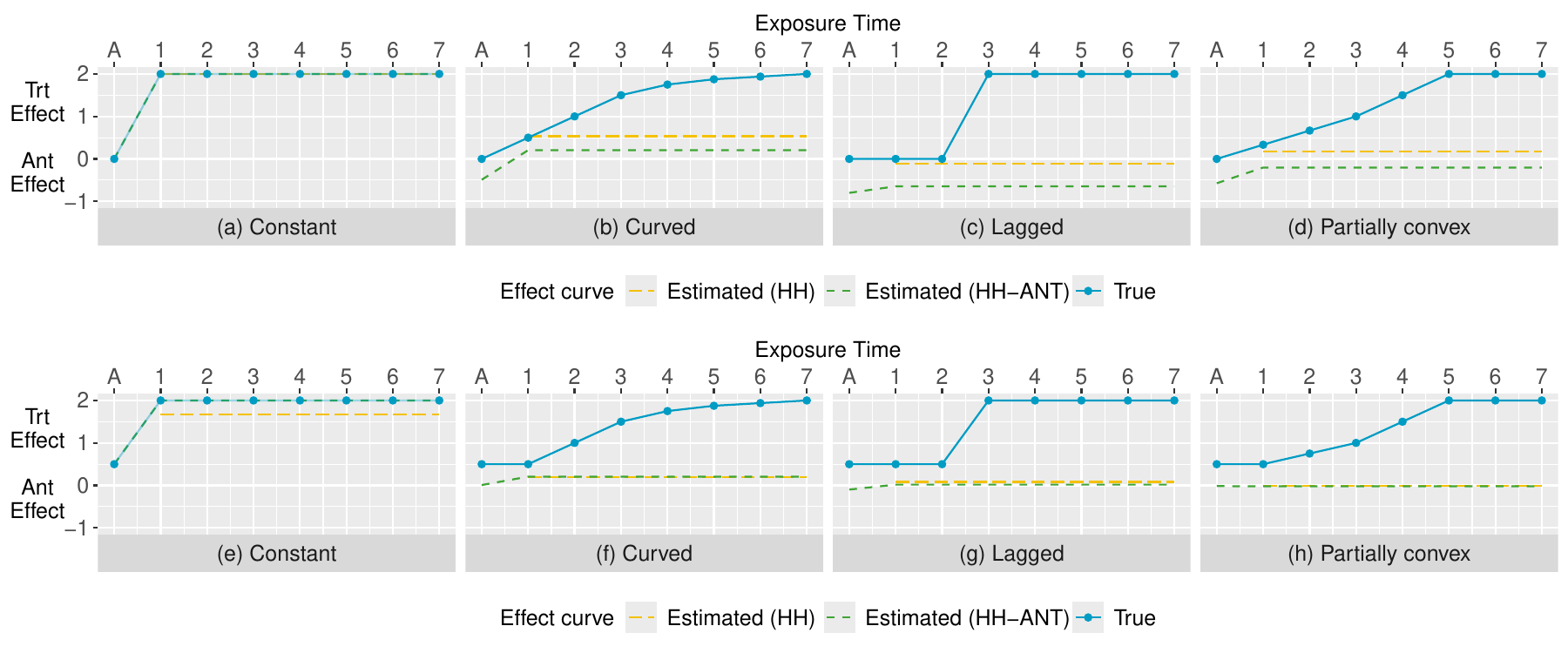}
    \caption{Four types of true treatment effect curves with their estimated curves under the HH or HH-ANT working model. We consider standard cross-sectional SW-CRTs with $I = 28$, $J = 8$, and $K = 50$, which gives $Q=7$ treatment sequences with $I_q=4$ clusters each and $\phi \approx 0.80$. The true data-generating processes are HH (constant, panel (a)), ETI (curved, lagged, partially convex, panels (b)-(d)), HH-ANT (constant, panel (e)) or ETI-ANT (curved, lagged, partially convex, panels (f)-(h)), where $\mu = 14$, $\beta_j = 0.5\sin\{2\pi (j-1)/7\}$, $\tau = 0.141$, $\sigma = 0.5$, $\gamma^\ETI = 0$ (no anticipation, panels (a)-(d)) or $\gamma^\ETIANT = 0.5$ (with anticipation, panels (e)-(f)), and $\ell = 1$. The anticipation effect occurs at exposure time $A$ (one period before the treatment adoption), and the point treatment effects occur when exposure times range from $1$ to $7$. In panels (a)-(d), for the constant treatment effect, $\delta^\HH = 2$ for $s\in\{1,\ldots,7\}$; for the curved treatment effect, $\delta^\ETI(1) = 0.5$, $\delta^\ETI(s) = 2-0.5^{s-2}$ for $s\in\{2,\ldots,6\}$, and $\delta^\ETI(7) = 2$; for the lagged treatment effect, $\delta^\ETI(s) = 0$ for $s\in\{1,2\}$ and $\delta^\ETI(s) = 2$ for $s\in\{3,\ldots,7\}$; for the partially convex treatment effect, $\delta^\ETI(s) = s/3$ for $s\in\{1,2,3\}$, $\delta^\ETI(4) = 1.5$, and $\delta^\ETI(s) = 2$ for $s\in\{5,6,7\}$. In panels (e) and (f), the settings are the same as that in Figure \ref{figure_HH}. The blue line is the true effect curve, and the yellow (or green) dashed line is the estimated effect curve using the HH (or HH-ANT) model via 2,000 simulated datasets.} \label{figure_HH-ANT}
\end{figure}

The results in Theorem \ref{theorem::HH-ANT_1} can be better illustrated via examples in Figures \ref{figure_HH-ANT}(a)-(d). As a check, when the treatment effect is constant, both the HH and HH-ANT models give correct estimated effect curves that align with the true curve (shown in  Figure \ref{figure_HH-ANT}(a)). When the exposure-time heterogeneity (curved, lagged, and partially convex) exists, as shown in Figures \ref{figure_HH-ANT}(b)-(d), both the HH and HH-ANT model estimators underestimate the true treatment effect curves, but interestingly, with more severe bias observed under the HH-ANT model. This suggests that when the true model is ETI, ignoring the exposure-time treatment effect heterogeneity using the HH-ANT working model could be less advantageous than using the HH; that is, a mismatch in model specification can be even ``worse'' than under-specification for treatment effect estimation in SW-CRTs.

As a side note, although the anticipation effect is often not of primary interest, the HH-ANT model does provide an estimate for such an effect. Therefore, it is of theoretical interest to understand the behavior of the anticipation effect estimator. The expectation of the anticipation effect estimator $\bbE(\widehat\gamma^\HHANT)$ depends on point treatment effects $\delta^\ETI(j)$ with weights $\psi_\HHANT^{\ETI}(j)$ for $j\in\{1,\ldots,J-1\}$. In Web Figure 1, most of the weights $\psi_\HHANT^\ETI(j)$ are negative for $j$ closer to $J-1$ and $\phi$ closer to 1. For instance, when $Q=3$, except for $\psi_\HHANT^\ETI(1)$, both $\psi_\HHANT^\ETI(2)$ and $\psi_\HHANT^\ETI(3)$ become more negative when $\phi$ approaches 1. Similar patterns can be observed for $Q\in\{5, 7\}$. In Figures \ref{figure_HH-ANT}(b)-(d), when the exposure-time heterogeneity (curved, lagged, and partially convex) exists, $\bbE(\widehat\gamma^\HHANT)$ is negative even when the true anticipation effect is zero, further highlighting the risk of bias under the mismatch scenario.

\begin{remark}
    Web Figure 2 compares $\pi_\HHANT^\ETI(j)$ with $\pi_\HH^\ETI(j)$ using examples with $Q\in\{3, 5, 7\}$ and $\phi\in[0,1]$. Specifically, when $Q=3$ and $\phi=1$, the HH working model only has one negative weight, i.e., $\pi_\HH^\ETI(3) < 0$, whereas the HH-ANT working model has two, i.e., $\pi_\HHANT^\ETI(j) < 0$ for $j\in\{2,3\}$. Similar patterns are observed for $Q=5$ and $Q=7$ with $\phi$ approaching 1. The number of negative $\pi_\HHANT^\ETI(j)$ is greater than that of $\pi_\HH^\ETI(j)$ for $j\in\{1,\ldots,J-1\}$, which, to a certain extent, can explain the reason why when the true model is ETI, the HH-ANT working model estimators result in more severe underestimation.
\end{remark}

\subsubsection{True Model: ETI-ANT}\label{sec::HH-ANT_2}

When the true model is ETI-ANT, the assumption of no exposure-time heterogeneity is violated under the HH-ANT working model. Theorem \ref{theorem::HH-ANT_2} provides a closed-form expression for the HH-ANT model estimators and their expectations when the true model is ETI-ANT.

\begin{theorem}\label{theorem::HH-ANT_2}
    \emph{For a standard SW-CRT with data generated from the ETI-ANT model, the HH-ANT estimators $\widehat\delta^\HHANT$ and $\widehat\gamma^\HHANT$ can be expressed as:
    \begin{align*}
        \widehat\delta^\HHANT &= \frac{6}{\phi Q(1 - Q^2) + (2Q^2 - 3Q + 1)(\phi Q + 1)}\displaybreak[0]\\
        &\quad\times\sumq\sumj \left[-\phi(Q - 2q + 1) + \frac{\phi Q + 1}{Q}\{Q\bbone(j=q) + 2Q\bbone(j>q) + \bbone(j=J) - 2j + 1\}\right]\overline Y_j(q),
    \end{align*}
    and
    \begin{align*}
        \widehat\gamma^\HHANT &=\frac{1}{(Q - 1)(\phi Q^2 - 2\phi Q + 2Q - 1)} \displaybreak[0]\\
        &\quad\times\sumq\sumj \left[(Q+1)(\phi Q + 2)\bbone(j=q)+6(\phi Q + 1)\bbone(j>q)+\frac{(Q+1)(\phi Q + 2)}{Q}\bbone(j = J)\right.\displaybreak[0]\\
        &\qquad\qquad\qquad\left.-4\phi Q - 6\phi j + 6\phi q + 2\phi - 2 - \frac{6j}{Q} + \frac{4}{Q}\right]\overline Y_j(q).
    \end{align*}
    Furthermore, the expectation of $\widehat\delta^\HHANT$ and $\widehat\gamma^\HHANT$ can be expressed as a weighted average of the point treatment effects:
    \begin{align}
        \bbE(\widehat\delta^\HHANT) = \sumjp \pi_\HHANT^\ETIANT(j) \delta^\ETIANT(j) \quad\text{and}\quad \bbE(\widehat\gamma^\HHANT) = \gamma^\ETIANT + \sumjp \psi_\HHANT^\ETIANT(j) \delta^\ETIANT(j),\label{theorem::HH-ANT_delta_gamma_2}
    \end{align}
    where  $\delta(j)$ is the true point treatment effect with exposure time $j \in \{1, \ldots, J-1\}$. The weights of $\delta(j)$ for $\bbE(\widehat\delta^\HHANT)$ and $\bbE(\widehat\gamma^\HHANT)$ are given by
    \begin{align*}
        \pi_\HHANT^\ETIANT(j)&=\frac{6 \left( \phi Q^3 - 3 \phi Q^2 j + \phi Q^2 + Q^2 + 2 \phi Q j^2 - 2 \phi Q j + \phi Q - 2 Q j + j^2 \right)}{Q \left( \phi Q^3 - 3 \phi Q^2  + 2 Q^2 + 2 \phi Q - 3 Q + 1 \right)},
    \end{align*}
    and
    \begin{align*}
        \psi_\HHANT^\ETIANT(j)&=\frac{2\phi Q^3 - 8\phi Q^2 j + 5\phi Q^2 + Q^2 + 6\phi Qj^2 - 8\phi Qj + 3\phi Q - 4Qj + Q + 3j^2 - j}{Q(Q - 1)(\phi Q^2 - 2\phi Q + 2Q - 1)},
    \end{align*}
    respectively.}
\end{theorem}

Results in Theorem \ref{theorem::HH-ANT_2} are illustrated via numerical examples in Figure \ref{figure_HH-ANT}(e)-(h). As a check, when the treatment effect is constant, the HH-ANT model estimator gives the correct estimated effect curve that aligns with the true curve (Figure \ref{figure_HH-ANT}(e)). The HH model estimator underestimates the true treatment effect curve, consistent with the results in Section \ref{sec::HH_1}. When the exposure-time heterogeneity (curved, lagged, and partially convex) exists, both the HH and HH-ANT model estimators underestimate the true treatment effect curve and lie entirely below it (Figures \ref{figure_HH-ANT}(f)-(h)). 

For the anticipation effect estimator $\widehat\gamma^\HHANT$, its expectation $\bbE(\widehat\gamma^\HHANT)$ depends on the true anticipation effect $\gamma^\ETIANT$ and point treatment effects $\delta^\ETIANT(j)$ with weights $\psi_\HHANT^\ETIANT(j)$ for $j\in\{1,\ldots,J-1\}$. In Figures \ref{figure_HH-ANT}(e)-(h), when the treatment effect is constant, the HH-ANT model estimator gives a correct estimate of the anticipation effect (Figure \ref{figure_HH-ANT}(e)). When the exposure-time heterogeneity (curved, lagged, and partially convex) exists, the expectation $\bbE(\widehat\gamma^\HHANT)$ is negative even though the true anticipation effect is positive (Figures \ref{figure_HH-ANT}(f)-(h)), showcasing that there is generally a high risk of bias in this under-specification scenario.

\section{Variance formulas when the Anticipation Effect is accounted for}\label{sec::sample_size}

Although Section \ref{sec::est} primarily focuses on bias, it does not address variance when the anticipation effect has been accounted for. Below, we investigate the implications on variance when using models accounting for the anticipation effect. We will derive a set of variance formulas for the treatment effect estimators accounting for the anticipation effect, which serve two purposes. First, they can answer to what extent the variance of the treatment effect estimator will change when we additionally include an anticipation effect. Second, in settings where anticipation is unavoidable (e.g., a healthcare policy intervention announced to the public several months before its implementation), the variance formulas we derived can motivate closed-form sample size estimation methods to ensure sufficient power for studying the primary treatment effect.

We follow the setting in Section \ref{sec::setup} by considering a general SW-CRT with $I$ clusters, $J$ periods, and equal cluster-period sizes $K_{ij}=K$ as a convention. Unlike the standard SW-CRT assumed in Section \ref{sec::est} where each treatment sequence contains exactly $I/Q$ clusters, the general setting allows for unequal allocation with $I_q$ clusters in sequence $q$, where $\sumq I_q = I$. Also, we let $\lambda_1 = 1-\rho$ and $\lambda_2 = 1+(JK-1)\rho$, with the total variance $\sigma_t^2 = \tau^2+\sigma^2$ and the ICC $\rho = \tau^2/\sigma_t^2$. Results of the variance expressions for the HH-ANT and ETI-ANT models are given in Theorem \ref{theorem::var}.

\begin{theorem} \label{theorem::var}
    \emph{For an SW-CRT with $I$ clusters, $J$ periods, and $K$ individuals per cluster-period, variances of treatment effect estimators are
    \begin{align} \label{var::HH-ANT_general}
        \bbV(\widehat\delta^\HHANT)= \frac{IJ\lambda_1\lambda_2\sigma_t^2}{K} \left\{\lambda_2\left(U^2+IJU-JW_1-IW_2-\frac{JW_5^2}{I^2-W_3}\right) + \lambda_1\left(IW_2-U^2\right)\right\}^{-1}
    \end{align}
    from the HH-ANT model, and
    \begin{align} \label{var::ETI-ANT_general}
        \bbV\left\{\frac{1}{J-1}\sums\widehat\delta^\ETIANT(s)\right\}= \frac{IJ\lambda_1\lambda_2\sigma_t^2}{K(J-1)^2} \mathbf{1}_{J-1}'\left\{\lambda_2\left(\bfU_1^{\otimes 2}+IJ\bfU_2-J\bfW_1-I\bfW_2 - \frac{J \bfW_5^{\otimes 2}}{I^2-W_3}\right) + \lambda_1\left(I\bfW_2-\bfU_1^{\otimes 2}\right)\right\}^{-1}\mathbf{1}_{J-1}
    \end{align}
    from the ETI-ANT model. In \eqref{var::HH-ANT_general}, $U = \sumi\sumj Z_{ij}$, $W_1 = \sumj (\sumi Z_{ij})^2$ and $W_2 = \sumi(\sumj Z_{ij})^2$ are design constants that only depend on the treatment assignment; $W_3 = (\sumi \bfA_i')(\sumi \bfA_i)$ and $W_5 = (\sumi \bfZ_i')(\sumi \bfA_i)$ are design constants that additionally depend on the anticipation indicator. When $\ell = 1$, $W_4$ is omitted because $W_4 = \sumi \bfA_i'\mathbf{1}_J\mathbf{1}_J'\bfA_i = I$, where $\mathbf{1}_J$ is a column vector of $J$ ones. In \eqref{var::ETI-ANT_general}, $\bfU_1 = \sumi \bfX_i'\mathbf{1}_J$, $\bfU_1^{\otimes 2} = \bfU_1\bfU_1'$, $\bfU_2 = \sumi \bfX_i'\bfX_i$, $\bfW_1 = (\sumi \bfX_i')(\sumi \bfX_i)$, $\bfW_2 = \sumi \bfX_i'\mathbf{1}_J\mathbf{1}_J'\bfX_i$, $W_3 = (\sumi \bfA_i')(\sumi \bfA_i)$, and $\bfW_5 = (\sumi \bfX_i')(\sumi \bfA_i)$. Here, $\bfX_i = \{\mathbf{0}_{(j^*-1)\times (J-1)}', (\bfI_{J-j^*+1}, \mathbf{0}_{(J-j^*+1)\times (j^*-2)})'\}'$ for the treatment adoption time $j^*\geq 2$ (by the definition of SW-CRTs), where $\mathbf{0}_{(j^*-1)\times (J-1)}$ is a $(j^*-1)\times (J-1)$ zero matrix and $\bfI_{J-j^*+1}$ is a $(J-j^*+1)\times (J-j^*+1)$ identity matrix.}
\end{theorem}

Theorem \ref{theorem::var} indicates that variances of the treatment effect estimators, under both the HH-ANT and ETI-ANT models, are independent of the true values of the point treatment effects and anticipation effect. Instead, they only depend on the trial resources (number of periods, clusters, individuals), treatment assignment (design constants), and the ICC, which are either pre-specified during the trial planning stage or approximated based on prior knowledge. Under the same setting in Theorem \ref{theorem::var}, Hussey and Hughes\cite{Hussey2007} have shown that the variance of the treatment effect estimator under the HH model (without anticipation effect) is
    \begin{align} \label{var::HH_general}
        &\bbV(\widehat\delta^\HH)= \frac{(\sigma_t^2/K)IJ\lambda_1\lambda_2}{(U^2+IJU-JW_1-IW_2)\lambda_2-(U^2-IW_2)\lambda_1}.
    \end{align}
Maleyeff et al.\cite{Maleyeff2022} have shown that the variance of the treatment effect estimator under the ETI model (without anticipation effect) is
\begin{align} \label{var::ETI_general}
    &\bbV\left\{\frac{1}{J-1}\sums\widehat\delta^\ETI(s)\right\}= \frac{IJ\lambda_1\lambda_2\sigma_t^2}{K(J-1)^2} \mathbf{1}_{J-1}'\left\{\lambda_2\left(\bfU_1^{\otimes 2}+IJ\bfU_2-J\bfW_1-I\bfW_2\right) + \lambda_1\left(I\bfW_2-\bfU_1^{\otimes 2}\right)\right\}^{-1}\mathbf{1}_{J-1}.
\end{align}
For the HH and HH-ANT models, the direct comparison of the general variance expressions \eqref{var::HH-ANT_general} and \eqref{var::HH_general} is challenging. Therefore, we evaluate them using a standard SW-CRT with fixed values of $(I, Q, K, \lambda_1, \lambda_2, \sigma_t^2)$ to give more insights. Under these conditions, the general variance expressions \eqref{var::HH-ANT_general} and \eqref{var::HH_general} reduce to the following simplified form
\begin{align*}
    \bbV(\widehat\delta^\HH) = \frac{12 Q \sigma_t^2 \lambda_1 \lambda_2}{I K (Q - 1)\{Q\lambda_1+(Q+2)\lambda_2\}}\quad\text{and}\quad\bbV(\widehat\delta^\HHANT) &= \frac{12 Q \sigma_t^2 \lambda_1 \lambda_2}{I K (Q - 1)\{Q\lambda_1+(Q-1)\lambda_2\}},
\end{align*}
which directly implies that $\bbV(\widehat\delta^\HH) < \bbV(\widehat\delta^\HHANT)$ due to the larger denominator in $\bbV(\widehat\delta^\HH)$. In contrast, for the ETI and ETI-ANT models, the general expressions \eqref{var::ETI-ANT_general} and \eqref{var::ETI_general} are directly comparable, and $\bbV(\widehat\Delta^\ETI) < \bbV(\widehat\Delta^\ETIANT)$ due to the smaller denominator in \eqref{var::ETI-ANT_general}. These comparisons clearly suggest that accounting for the anticipation effect would generally lead to an inflated variance of the treatment effect estimator, regardless of exposure-time treatment effect heterogeneity.

\begin{figure}[t]
    \centering
    \includegraphics[width=0.8\linewidth]{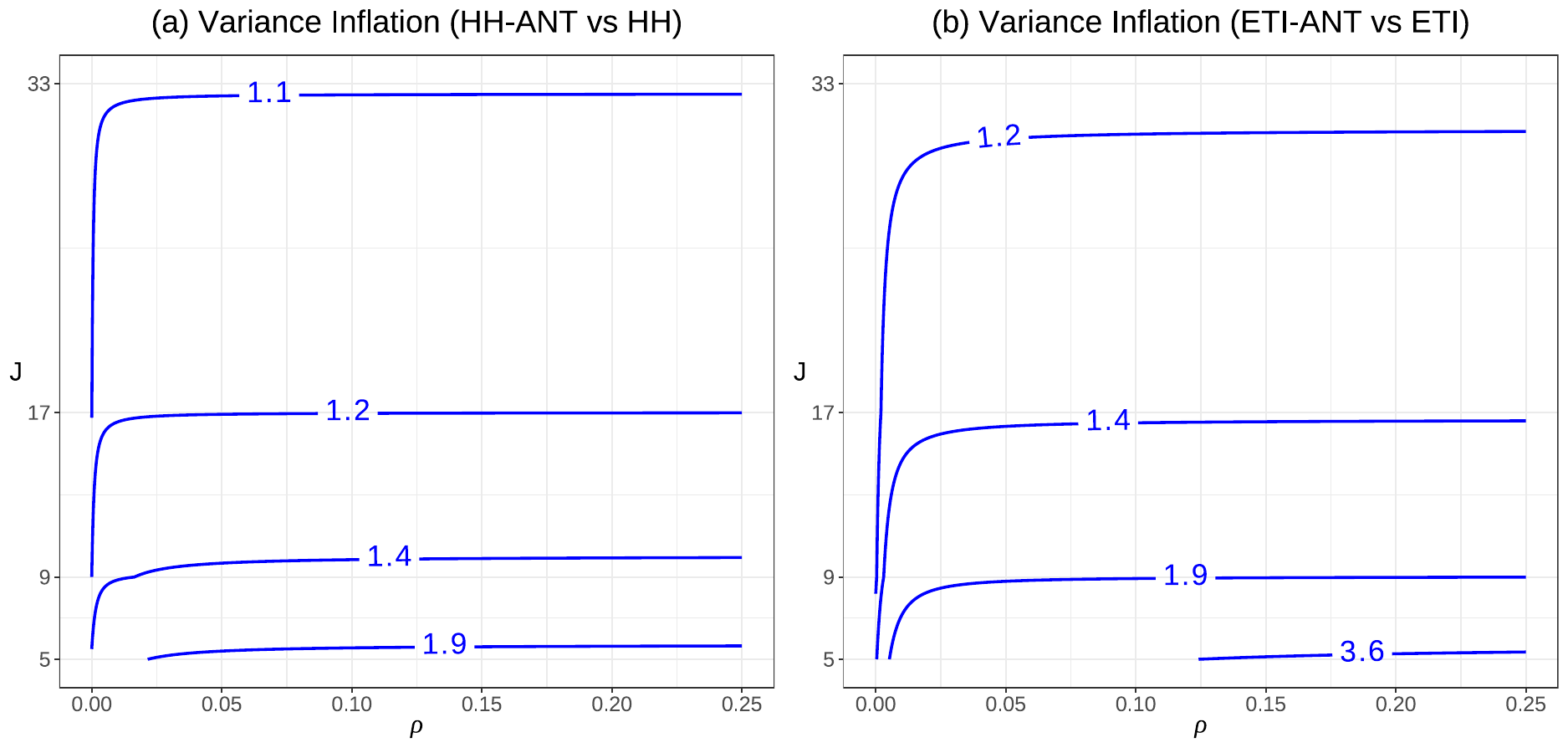}
    \caption{Contour plots of variance inflation as functions of $\rho$ and $J$ in a general SW-CRT. Here, $I = 32$, $J \in \{5, 9, 17, 33\}$, $K = 100$, $\sigma^2 = 1$, and $\rho \in [0, 0.25]$. Blue lines are $\bbV(\widehat\delta^\HHANT)/\bbV(\widehat\delta^\HH)$ in panel (a) and $\bbV(\widehat\Delta^\ETIANT)/\bbV(\widehat\Delta^\ETI)$ in panel (b).}\label{figure_coeff_HH_vs_HH-ANT}\label{figure_variance_inflation}
\end{figure}

To offer additional insights, we empirically investigate the amount of variance inflation in Figure \ref{figure_variance_inflation} under different combinations of design parameters. Specifically, we plot the values of $\bbV(\widehat\delta^\HHANT)/\bbV(\widehat\delta^\HH)$ and $\bbV(\widehat\Delta^\ETIANT)/\bbV(\widehat\Delta^\ETI)$ with varying $\rho$ and $J$. These figures confirm that the variance inflation becomes more severe with larger $\rho$ and smaller $J$. For example, under a constant treatment effect scenario, the variance inflation due to accounting for the anticipation effect will be between 40\%-50\% even for a trial with $J=9$ periods over a wide range of non-trivial ICC values. The variance inflation only reduces to 10\% for a trial with over 30 periods when $\rho>0.02$. In comparison, the variance inflation is even higher under an exposure-time-dependent treatment effect scenario. Based on these explorations, we conclude that the study team should reflect on the potential for anticipation effect before adopting the HH-ANT or ETI-ANT models for the sample size calculation due to the substantial amount of variance inflation. Addressing the anticipation effect when it does not exist could be a severely conservative approach that requires a much higher level of sample size and, hence, should be considered with caution.

When the investigators consider the anticipation effect of being highly likely, the general variance expressions \eqref{var::HH-ANT_general}-\eqref{var::ETI-ANT_general} can be used to facilitate power calculations. For instance, to test the hypothesis $H_0: \delta = 0$ versus $H_1: \delta = \delta^*$ in an SW-CRT with $I$ clusters, $J$ periods, and $K$ individuals per cluster-period, a Wald test can be used. At the significance level of $\alpha$, the power of a two-tailed test is expressed as
\begin{align}
    \text{Power} = \Phi \left\{{|\delta^*|}/{\sqrt{\bbV(\widehat\delta)}}-z_{1-\alpha/2}\right\},\label{power_formula}
\end{align}
where $\delta^*$ is the treatment effect under the alternative hypothesis, $\widehat\delta$ is the estimated treatment effect, $\Phi$ is the cumulative standard Normal distribution function, and $z_{1-\alpha/2}$ is the $(1-\alpha/2)$-th quantile of the standard Normal distribution. In a similar fashion, the power formula for the TATE can be retrieved by replacing $(\delta^*, \widehat\delta)$ in \eqref{power_formula} with $(\Delta^*, \widehat\Delta)$, where $\Delta^*$ is the true average treatment effect under the alternative hypothesis, and $\widehat\Delta$ is the estimated average treatment effect (see Web Appendix C for additional details). Of note, these formulas assume correctly specified models. When the working model is misspecified, however, the interpretation changes substantially. Consider the scenario where the true model is HH-ANT but the working model is HH. Under this misspecification, the estimator $\widehat\delta^\HH$ converges to $\delta^\HHANT + \omega_\HH^\HHANT \gamma^\HHANT$ rather than the true treatment effect $\delta^\HHANT$. This bias has implications for hypothesis testing. Specifically, when the true treatment effect is null ($\delta^\HHANT = 0$) but an anticipation effect exists ($\gamma^\HHANT \neq 0$), such a test carries inflated type I error rates and becomes invalid; in this case, the power of this test is generally less meaningful due to the invalid test size. A similar type I error inflation occurs when the working model is ETI and the true model is ETI-ANT.

\section{Simulation Studies}\label{sec::simulation}

We conduct simulation studies under different SW-CRT designs to evaluate the finite-sample performances of the linear mixed model estimators based on model formulations given in Section \ref{sec::formulation}. Scenario numbers in each section correspond to row indices in Table \ref{tab::true_vs_working_models}, with Web Tables 3 and 4 summarizing the specific research questions addressed by each study. For all simulations, we consider a standard SW-CRT with $J=9$, $I=32$, and $K=100$. We fix $\alpha_i \sim \calN(0, 0.141^2)$ and $\epsilon_{ijk} \sim \calN(0, 1)$, yielding an ICC of $\tau^2/(\tau^2+\sigma^2) \approx 0.0195$ with $\tau^2 = 0.141^2$ and $\sigma^2 = 1$. We also set $\mu = 0$ and $\beta_j = j$. Results are based on 2,000 simulated datasets for each scenario. In Simulation Studies III and IV, we choose a sinusoidal function $\delta^\ETI(s) = \delta^\ETIANT(s) = -1.41\sin\{2\pi(s-1)/7\}+0.12$ to represent exposure-time treatment effect heterogeneity, following one of the functional forms proposed in Maleyeff et al.\cite{Maleyeff2022} Of note, this function is different from the constant, curved, lagged, and partially convex patterns in Section \ref{sec::est}, and is meant to complement the earlier analytic results; we provide a visualization of this sinusoidal pattern in Web Figure 12. We return to a discussion of alternative functional forms and their implications in Section \ref{sec::conclusion}.

\subsection{Simulation Study I: True Model is HH (Scenarios 1-2)}\label{sec::simulation_I}

In Simulation Study I, outcomes are generated from the HH model in the absence of any anticipation effect and exposure time heterogeneity. The working models are HH and HH-ANT. We first set $\delta^\HH = 0$ to compare the type I error rate of these two working models.

Results are summarized in Table \ref{simulation::HH-ANT}. We found that treatment effect estimators from both working models are unbiased. The empirical standard deviation and the average model-based standard error of the HH model estimator are correspondingly smaller than those of the HH-ANT model estimator, which is consistent with the theoretical results. We additionally set $\delta^\HH = 0.075$ to compare the power of tests for the treatment effect based on the HH model and the HH-ANT model. Table \ref{simulation::HH-ANT} shows that the power of the test based on the HH model is greater than that based on the HH-ANT model, demonstrating the efficiency loss when unnecessarily accounting for anticipation.

\subsection{Simulation Study II: True Model is HH-ANT (Scenarios 5-8)}\label{sec::simulation_II}

In Simulation Study II, outcomes are generated from the HH-ANT model with $\gamma^\HHANT = 0.04$ and $\delta^\HHANT = 0.075$. We consider four working models in \eqref{model::HH}-\eqref{model::ETI-ANT} with results in Table \ref{simulation::HH-ANT}. If the working model fails to account for the anticipation effect, i.e., the HH model and the ETI model (defined in Section \ref{sec::formulation},) the treatment effect is underestimated, which is consistent with results in Theorems \ref{theorem::HH_1} and \ref{theorem::ETI_1}. Noticeably, the ETI model leads to more severe underestimation than the HH model, confirming that the mismatch scenario can inflate bias. On the other hand, the treatment effect estimators and the anticipation effect estimator from the HH-ANT model and the ETI-ANT model are unbiased. The empirical standard deviation and the average model-based standard error of $\widehat\delta^\HHANT$ are correspondingly smaller than those of $\widehat\Delta^\ETIANT$, which is consistent with the theoretical results of $\bbV(\widehat\delta^\HHANT) \approx 0.0240^2$ and $\bbV(\widehat\Delta^\ETIANT) \approx 0.0426^2$. 

In this empirical evaluation, tests for the treatment effect and the anticipation effect under the HH-ANT and the ETI-ANT working models all maintain the nominal type I error rate (5\%). Finally, the test for the treatment effect and the anticipation effect based on the HH-ANT model has the highest power. To provide a more in-depth investigation, Web Appendix D.1 presents a numerical study comparing the analytic power for the treatment effect between the HH model and the HH-ANT model in the presence of an anticipation effect. Interestingly, the relative power of these two models critically depends on the ICC $\rho$ and relative magnitude of the anticipation effect (to the treatment effect) $\gamma^\HHANT/\delta^\HHANT$. For example, in Web Figure 3 (a) under the scenario where $J = 5$ and $\delta^\HHANT = 0.1$, over a range of ICC parameters from 0 to 0.25, the test based on the HH-ANT model has higher power when the anticipation effect is relatively large (e.g., $\gamma^\HHANT/\delta^\HHANT>0.295$), whereas the test based on the HH model is more powerful otherwise (e.g., $\gamma^\HHANT/\delta^\HHANT\leq0.295$). Further discussions on power comparison between these two models can be found in Web Appendix D.1.

\begin{table}[t]
    \centering
    \caption{Results over $2,000$ simulated datasets generated under the setting in Simulation Studies I and II. $\sd(\cdot)$: the empirical (Monte Carlo) standard deviation. $\se(\cdot)$: the average model-based standard error. C$(\cdot)$ (\%): the empirical coverage percentage of 95\% confidence intervals. P$(\cdot)$ (\%): the statistical power of rejecting $H_0:\delta=0$ or $H_0:\gamma=0$.} \label{simulation::HH-ANT}
    \resizebox{\linewidth}{!}{
    \begin{tabular}{llcccccccccccc}
        \toprule
        \textbf{True Model} & \textbf{Working Model} & $\delta$ & $\widehat\delta$ & $\widehat\gamma$ & $\widehat\tau$ & $\sd(\widehat\delta)$ & $\se(\widehat\delta)$ & C$(\delta)$ (\%) & P$(\delta)$ (\%) & $\sd(\widehat\gamma)$ & $\se(\widehat\gamma)$ & C$(\gamma)$ (\%) & P$(\gamma)$ (\%) \\
        \midrule
        \multirow{4}{*}{HH} & HH & 0 & 0.0001 & - & 0.1402 & 0.0201 & 0.0203 & 95.50 & 4.50 & - & - & - & - \\ 
        & HH-ANT & & 0.0000 & -0.0001 & 0.1402 & 0.0239 & 0.0240 & 95.55 & 4.45 & 0.0222 & 0.0224 & 94.60 & 5.40 \\[2.5px]
        \cline{2-14} \\[-7.5pt]
        & HH & 0.075 & 0.0751 & - & 0.1402 & 0.0201 & 0.0203 & 95.50 & 96.25 & - & - & - & - \\ 
        & HH-ANT & & 0.0750 & -0.0001 & 0.1402 & 0.0239 & 0.0240 & 95.55 & 88.15 & 0.0222 & 0.0224 & 94.60 & 5.40 \\ 
        \midrule
        \multirow{4}{*}{HH-ANT} & HH & 0.075 & 0.0521 & - & 0.1403 & 0.0201 & 0.0203 & 79.85 & 73.35 & - & - & - & - \\ 
        & HH-ANT & & 0.0750 & 0.0399 & 0.1402 & 0.0239 & 0.0240 & 95.55 & 88.15 & 0.0222 & 0.0224 & 94.60 & 42.75 \\ 
        & ETI & & 0.0318 & - & 0.1407 & 0.0326 & 0.0324 & 73.85 & 15.85 & - & - & - & - \\ 
        & ETI-ANT & & 0.0756 & 0.0400 & 0.1402 & 0.0427 & 0.0424 & 95.45 & 42.75 & 0.0251 & 0.0253 & 94.45 & 34.80 \\ 
        \bottomrule
    \end{tabular}
    }
\end{table}

\subsection{Simulation Study III: True Model is ETI (Scenarios 9-12)}\label{sec::simulation_III}

In Simulation Study III, we focus on the setting where the outcomes are generated from the ETI model with $\delta^\ETI(s) = -1.41\sin\{2\pi(s-1)/7\}+0.12$ and $\Delta^\ETI = 0.12$. The assumption of no anticipation holds, but the assumption of no exposure-time heterogeneity is violated due to time-varying point treatment effects. As in Simulation Study II, we consider four working models. Results are given in Table \ref{simulation::ETI-ANT}.

The HH and HH-ANT model estimators do not account for exposure-time heterogeneity and give negatively biased estimates, which is consistent with results in Theorems \ref{theorem::HH_2} and \ref{theorem::HH-ANT_1}. The treatment effect estimator of the HH-ANT model is more negatively biased compared to the HH model estimator, consistent with the theoretical results discussed in Section \ref{sec::HH-ANT} and also confirming that the mismatch scenario can amplify the bias. In contrast, the ETI model and the ETI-ANT model estimators are consistent for both the treatment and anticipation effects, and tests based on these estimators maintain the nominal type I error rate. Also, the empirical standard deviation and the average model-based standard error of $\widehat\Delta^\ETI$ from the ETI model are correspondingly smaller than those of $\widehat\Delta^\ETIANT$ from the ETI-ANT model, which is consistent with the theoretical results of $\bbV(\widehat\Delta^\ETI) \approx 0.0325^2$ and $\bbV(\widehat\Delta^\ETIANT) \approx 0.0426^2$.

\begin{table}[t]
    \centering
    \caption{Results over $2,000$ simulated datasets generated under the setting in Simulation Studies III and IV. $\sd(\cdot)$: the empirical (Monte Carlo) standard deviation. $\se(\cdot)$: the average model-based standard error. C$(\cdot)$ (\%): the empirical coverage percentage of 95\% confidence intervals. P$(\cdot)$ (\%): the statistical power of rejecting $H_0:\Delta=0$ or $H_0:\gamma=0$.} \label{simulation::ETI-ANT}
    \resizebox{\linewidth}{!}{
    \begin{tabular}{llcccccccccccc}
        \toprule
        \textbf{True Model} & \textbf{Working Model} & $\Delta $ & $\widehat\Delta$ & $\widehat\gamma$ & $\widehat\tau$ & $\sd(\widehat\Delta)$ & $\se(\widehat\Delta)$ & C$(\Delta)$ (\%) & P$(\Delta)$ (\%) & $\sd(\widehat\gamma)$ & $\se(\widehat\gamma)$ & C$(\gamma)$ (\%) & P$(\gamma)$ (\%) \\ 
        \midrule
        \multirow{4}{*}{ETI} & HH & 0.12 & -0.8480 & - & 0.3402 & 0.0206 & 0.0235 & 0.00 & 100.00 & - & - & - & - \\ 
        & HH-ANT & & -1.1072 & -0.4302 & 0.3977 & 0.0247 & 0.0279 & 0.00 & 100.00 & 0.0224 & 0.0256 & 0.00 & 100.00 \\ 
        & ETI & & 0.1206 & - & 0.1402 & 0.0326 & 0.0324 & 94.65 & 95.75 & - & - & - & - \\ 
        & ETI-ANT & & 0.1206 & 0.0000 & 0.1402 & 0.0427 & 0.0424 & 95.45 & 80.75 & 0.0251 & 0.0253 & 94.45 & 5.55 \\
        \midrule
        \multirow{4}{*}{ETI-ANT} & HH & 0.12 & -0.8721 & - & 0.3454 & 0.0206 & 0.0235 & 0.00 & 100.00 & - & - & - & - \\ 
        & HH-ANT & & -1.1072 & -0.3902 & 0.3977 & 0.0247 & 0.0279 & 0.00 & 100.00 & 0.0224 & 0.0256 & 0.00 & 100.00 \\ 
        & ETI & & 0.0768 & - & 0.1407 & 0.0326 & 0.0324 & 73.85 & 66.50 & - & - & - & - \\ 
        & ETI-ANT & & 0.1206 & 0.0400 & 0.1402 & 0.0427 & 0.0424 & 95.45 & 80.75 & 0.0251 & 0.0253 & 94.45 & 34.80 \\ 
        \bottomrule
    \end{tabular}
    }
\end{table}

\subsection{Simulation Study IV: True Model is ETI-ANT (Scenarios 13-16)}\label{sec::simulation_IV}

In Simulation Study IV, we generate the outcomes from the ETI-ANT model with $\gamma^\ETIANT = 0.04$, $\delta^\ETIANT(s) = -1.41\sin\{2\pi(s-1)/7\}+0.12$, and $\Delta^\ETIANT = 0.12$. The assumptions of no anticipation and no exposure-time heterogeneity are both violated due to the anticipation effect and time-varying point treatment effects. Results for the four working models are given in Table \ref{simulation::ETI-ANT}. If the working model does not account for exposure-time heterogeneity, i.e., the HH model and the HH-ANT model, the estimators are negatively biased for both the treatment and anticipation effects, which is consistent with results in Theorems \ref{theorem::HH_2} and \ref{theorem::HH-ANT_2}. The treatment effect estimator from the HH-ANT model is more negatively biased compared to the HH model estimator. Interestingly, the anticipation effect estimator from the HH-ANT model is negative even though the true anticipation effect is positive. This aligns with the pattern of negative weights $\psi_\HHANT^\ETIANT(j)$ in Web Figure 1, where the bias caused by ignoring exposure-time heterogeneity can reverse the sign of $\widehat\gamma^\HHANT$ (Theorem \ref{theorem::HH-ANT_2}).

If the working model accounts for exposure-time heterogeneity but not the anticipation effect, i.e., the ETI model, the treatment effect is underestimated, consistent with results in Theorems \ref{theorem::ETI_2}. In contrast, the ETI-ANT model estimators are consistent for both the average treatment and anticipation effects, and tests based on these estimators maintain the nominal type I error rate. In Web Appendix D.2, we present a numerical study to compare the analytic power of the ETI and ETI-ANT models in the presence of an anticipation effect. In general, the comparative findings are similar to those reported for the HH model and the HH-ANT model in Web Appendix D.1, and depend on the ICC and the relative magnitude of anticipation effect to the true treatment effects.

\begin{figure}[t]
    \centering
    \includegraphics[width=\linewidth]{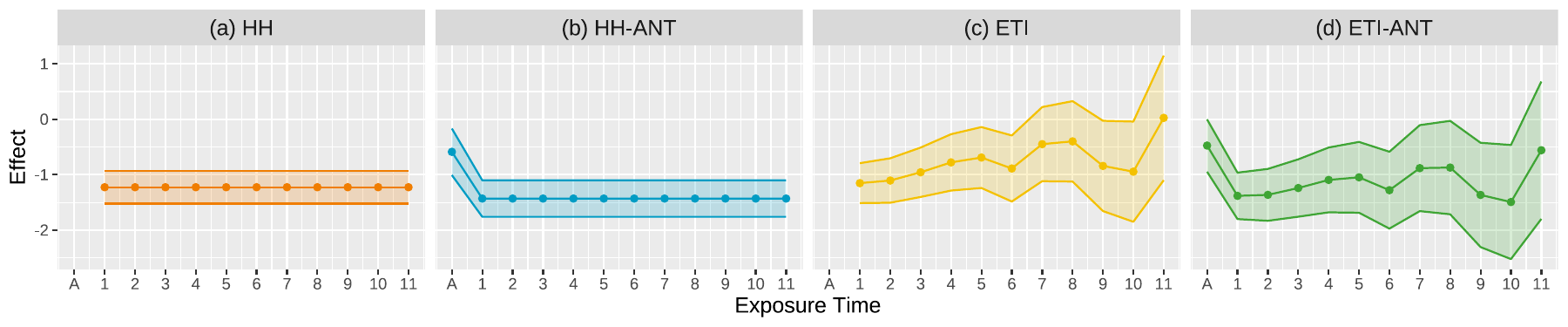}
    \caption{Estimated effect curves with 95\% confidence intervals for the REDAPS study using the HH model (panel (a)), the HH-ANT model (panel (b)), the ETI model (panel (c)), and the ETI-ANT model (panel (d)). The anticipation effect occurs at the exposure time $A$, and the point treatment effects occur when exposure times range from 1 to 11.} \label{figure_data_analysis_REDAPS}
\end{figure}

\section{An illustrative example with the REDAPS SW-CRT}\label{sec::da}

We analyze data from the Randomized Evaluation of Default Access to Palliative Services (REDAPS) trial, which is an SW-CRT that was conducted over a 32-month period.\cite{Courtright2024} This trial compared the intervention (a default order for palliative care consultation) with usual care (clinicians choose to order palliative care) among adults with chronic serious illnesses admitted to 11 Ascension hospitals across 8 US states. This study involved a population of $6,811$ patients who had lengths of stay of at least 72 hours and received a completed palliative care consult, consistent with the secondary outcome analysis in Courtright et al.\cite{Courtright2024} The 11 Ascension hospitals were treated as clusters, each assigned to one of 11 distinct sequences over 12 discrete periods. 

We specifically examine the continuous outcome of the days until patients consult. This factor is crucial for evaluating administrative efficiency. In the REDAPS trial, patients were blinded while clinicians were unblinded to preserve their ability to cancel the default order. Since the nature of the intervention made blinding of clinicians infeasible and the outcome of interest was closely related to clinicians, there was a potential that clinicians might alter their behavior before treatment adoption time, thus introducing an anticipation effect. In fact, the possible presence of such an anticipation effect is further supported by the educational preparations before intervention initiation: palliative care staff received targeted in-service conferences and webinars 3 months before the intervention phase at each hospital; other clinical staff received notification about the study via established hospital communication channels (e.g., e-mails, posters, leadership meetings, and weekly communications) starting 2 weeks before the scheduled intervention and continuing throughout the intervention period.\cite{Courtright2016} Therefore, these early notifications raised clinician awareness of the upcoming intervention, leading to possible anticipatory behaviors that may or may not improve the administrative efficiency.

Estimation results are shown in Figure \ref{figure_data_analysis_REDAPS}. To test for exposure-time treatment effect heterogeneity in the REDAPS trial, we follow the procedures based on likelihood ratio test mentioned in Maleyeff et al.\cite{Maleyeff2022} Specifically, we use the likelihood ratio test of the treatment effect term in the ETI model with the null hypothesis $H_0: \delta(1) = \delta(2) = \cdots = \delta(J-1)$, which gives a p-value of 0.15. Similarly, we perform the likelihood ratio test in the ETI-ANT model and obtain a p-value of 0.36. Based on these results, we conclude that there is no strong evidence of exposure-time treatment effect heterogeneity in this trial. Therefore, our subsequent empirical analysis mainly focuses on results from models that assume a constant treatment effect (HH and HH-ANT).

To assess the presence of an anticipation effect, we check the estimated anticipation effects from the HH-ANT model and the ETI-ANT model, which are both negative and statistically significant. In particular, $\widehat\gamma^\HHANT = -0.59$ $(-1.01$ to $-0.17)$ under the HH-ANT model, which indicates a decrease in time until consultation before actual exposure to the treatment. Such a reduction suggests that clinicians who are aware of the upcoming implementation of default consultation orders may have proactively adjusted their clinical and administrative practices to enhance operational efficiency even before the intervention began. Moreover, the estimated treatment effects from the HH-ANT model and the ETI-ANT model are lower than those obtained from models that do not account for the anticipation effect. For instance, $\widehat\delta^\HH = -1.23$ $(-1.52$ to $-0.93)$ under the HH model and $\widehat\delta^\HHANT = -1.43$ $(-1.76$ to $-1.10)$ under the HH-ANT model. This suggests that neglecting the negative anticipation effect may result in an underestimation of the magnitude of the true treatment effect, which is aligned with results in Theorem \ref{theorem::HH_1}. As a side note, $|\widehat\gamma^\HHANT| < |\widehat\delta^\HHANT|$ under the HH-ANT model, which can be explained by initial gains in treatment benefit from early educational preparations prior to the intervention, followed by even greater operational improvements realized once the intervention was formally implemented. This empirical example illustrates the potential impact of the anticipation effect in the working model, especially in scenarios where blinding of providers who are delivering the intervention is partially or entirely infeasible.

\section{Concluding Remarks}\label{sec::conclusion}

Our study has two sets of primary findings. First, when neither exposure-time heterogeneity nor the anticipation effect is present, the HH model is the most efficient choice. However, using HH-ANT as a conservative approach when anticipation is uncertain causes variance inflation of approximately 10\%-90\% depending on the number of periods and ICC (see Figure \ref{figure_coeff_HH_vs_HH-ANT}), though it provides protection against bias if anticipation exists. When only the anticipation effect is present, the HH-ANT model provides consistent estimates for both the treatment effect and the anticipation effect. On the contrary, using the HH model in the presence of the anticipation effect leads to biased (often attenuated) estimates of the true treatment effect. Furthermore, when only exposure-time heterogeneity is present, both the HH model and the HH-ANT model lead to severely biased estimates of the treatment effect. In the mismatch scenario, the HH-ANT model may surprisingly give even more biased estimates of the treatment effect than the HH model. When only the exposure-time heterogeneity is expected but its functional form is unknown, the ETI model provides consistent estimates because it allows $\delta^\ETI(s)$ to be unique for each exposure time, accommodating any pattern of exposure-time dependent treatment effects. If researchers expect exposure-time heterogeneity and an anticipation effect, then the ETI-ANT model is recommended. To aid in the determination of the anticipation effect and exposure-time heterogeneity in SW-CRTs via a data-driven approach (often in secondary analyses), the ETI-ANT model can be fitted as a preliminary, exploratory step. Based on the general principles of comparing nested models, in the secondary analysis, one can simplify the model formulations by picking ETI, HH-ANT, or HH models, or consider analyses under these nested models as sensitivity results to assess the robustness of the primary findings. However, we caution that an SW-CRT may not be powered for such tests based on nested models, and hence, the testing results should be interpreted with caution. 

Second, we provide expressions for the variance of treatment effect estimators, which can be used in sample size calculations based on prior content knowledge regarding the anticipation effect and exposure-time heterogeneity during the trial planning stage. We have also numerically explored the power for testing treatment effects under the HH-ANT versus HH models, as well as under ETI-ANT versus ETI models, in the presence of an anticipation effect. The results show that, depending on the relative strength of the anticipation effect to the true treatment effect, one set of models may be more powerful. More specifically, the comparison depends on the relative reduction in effect size and variance and hence could be indeterminate in general. Regardless, just because the simpler HH or ETI models are more powerful than their anticipation model counterparts, the returned treatment effect estimates by the simpler models could be biased, as we show in Theorem \ref{theorem::HH_1} and \ref{theorem::ETI_2}. Such a biased treatment effect estimate may go against the call for clarity and transparency of estimands in the ICH E9(R1) Estimands Addendum for clinical trials,\cite{ICH_E9R1_2020, Kahan2024} and hence should be interpreted with caveats in the presence of anticipation. 

We wish to reiterate that the primary purpose of the current study is to formally characterize the risks of not accounting for the anticipation effect, time-varying treatment effects, or both in the context of an SW-CRT. It calls for greater awareness of this historically neglected anticipation effect in SW-CRTs. From a trial conduct and implementation perspective, the best approach to avoid anticipation effects in SW-CRTs is through treatment blinding and careful introduction of the intervention just prior to the randomized transition time point. However, when blinding is infeasible due to administrative constraints or the nature of the research question, it could be helpful to account for the anticipation effect in both the design and analysis of an SW-CRT, or at a minimum, consider addressing anticipation effects in sensitivity analyses. Finally, another potential approach to reduce the impact of anticipation effect is to consider one or more implementation periods, which lead to examples of a general class of incomplete designs.\cite{Hemming2015} For example, by omitting the data from the transition period or its adjacent prior period, one could have a ``clean'' subset for data analysis without the need to consider the anticipation effect. This approach has been implemented in many empirical studies, and including an implementation period has actually become one of the most common incomplete SW-CRTs in practice.\cite{Zhang2023}

Although we attempt to characterize the impact of model misspecification regarding the treatment effect structure under a wide array of settings in Table \ref{tab::true_vs_working_models}, the current study has several limitations. First, we assume the treatment effect depends on exposure time instead of calendar time. However, this assumption may be violated in clinical trials sensitive to seasonal fluctuations or experiencing significant external impacts. Lee et al.\cite{Lee2024} demonstrated that when the true underlying treatment effect varies with calendar time, the ETI model can give severely misleading estimates of the (calendar) time-averaged treatment effect. Additionally, Chen and Li\cite{Chen2024} studied independence generalized estimating equations estimators to quantify calendar-time varying treatment effect in SW-CRTs. Nonetheless, the role of anticipation in the presence of a calendar-time varying treatment effect remains unclear and deserves future studies. 

Second, although our simulation study and data application only include continuous outcomes, the models in \eqref{model::HH}-\eqref{model::ETI-ANT} can be generalized to other types of exponential family outcomes (e.g., binary or count outcomes) by selecting appropriate link and variance functions. However, unlike models with continuous outcomes discussed in this article, these extensions typically do not have closed-form estimating equations, making it challenging to analytically derive the expectations of the treatment effect estimator. Therefore, future simulation studies are needed to numerically explore and quantify the impact of anticipation under these more complex outcome scenarios. 

Third, although more complex correlation structures (such as nested exchangeable\cite{Hooper2016} and discrete time decay\cite{Kasza2019a}) have been proposed in the stepped wedge literature, for analytical tractability and clear interpretations, we assume a simple exchangeable correlation structure. This assumption allows us to derive closed-form expressions and characterize the bias when the working model fails to account for the anticipation effect, the exposure-time treatment effect heterogeneity, or both, assuming the correlation structure is correctly specified. Future research is needed to investigate the behaviors of linear mixed model estimators when both the assumed treatment effect structure and the correlation structure are misspecified in the presence of an anticipation effect.

Fourth, our simulation studies use a sinusoidal function for exposure-time treatment effects, while our theoretical illustrations use curved, lagged, and partially convex patterns. The sinusoidal function was chosen specifically because it represents a challenging scenario with both positive and negative deviations from the mean effect, providing a stringent test of our theoretical results. This diversity of functional forms is intentional, as our bias formulas in Section \ref{sec::est} are expressed in terms of PTEs $\delta(s)$ at each exposure time $s$, making them applicable to any functional form. We highlight the need for future research to explore how different functional forms of exposure-time dependent treatment effects interact with anticipation effects in SW-CRTs, particularly examining whether the model misspecification has more severe consequences for certain patterns of treatment effect heterogeneity in the presence of anticipation. 

Finally, we have only considered the first-order anticipation effect, whereas in practice, a higher-order anticipation effect may exist. In general principle, if the working model does not account for higher-order anticipation, the magnitude of bias will depend on the lag duration of these anticipation effects (see a discussion in Web Appendix E). This lag duration itself varies according to the specific trial context and the nature of the intervention. Thus, it may be of interest to conduct hypothesis tests on the lag duration and the anticipated-time heterogeneity for the anticipation effect (an analog to exposure-time heterogeneity for the treatment effect). We leave a detailed exploration of this topic to future research.

\section*{Acknowledgments}
Research in this article was supported by a Patient-Centered Outcomes Research Institute Award\textsuperscript{\textregistered} (PCORI\textsuperscript{\textregistered} Award ME-2022C2-27676). The statements presented are solely the responsibility of the authors and do not necessarily represent the official views of PCORI\textsuperscript{\textregistered}, its Board of Governors, or the Methodology Committee. The authors thank Vanessa Madden and Brian Bayes for help with accessing the de-identified REDAPS trial data. 

\section*{Data availability statement}
The REDAPS trial data used as an illustrative example in Section \ref{sec::da} were obtained from the REDAPS study investigators. Restrictions may apply to the availability of these data. Data requests should be submitted to Dr. Fan Li and Dr. Michael O. Harhay, who will correspond with the REDAPS investigators to obtain data permission.

\section*{Supporting information}
Additional supporting information, including Web Appendices A--F, Web Tables, and Figures, may be found online in the supporting information tab for this article. All R code for the simulation study and trial planning software is publicly available at \href{https://github.com/haowangbiostat/anticipation_swcrt}{https://github.com/haowangbiostat/anticipation\_swcrt}.

\bibliography{SWDanticipation}

\end{document}